\documentclass[a4paper,conference]{IEEEtran}

\usepackage{cite}
\usepackage[cmex10]{amsmath}
\interdisplaylinepenalty=2500
\usepackage{amsthm}



\usepackage{array}

\usepackage{proof}

\usepackage{url}

\usepackage{rotating}
\usepackage{multirow}
\usepackage[all]{xy}

\newcommand{\ignore}[1]{}

\usepackage{txfonts}

\newcommand{\lfp}[1]{\hbox{\em lfp}(#1)}
\newcommand{\gfp}[1]{\hbox{\em gfp}(#1)}

\newcommand{\F}{{\cal F}}

\newcommand{\x}{\vec{x}}

\renewcommand{\t}{\vec{t}}


\renewcommand{\|}{\; | \;}

\newcommand{\mmuNJ}{\mu\mbox{NJ}}
\newcommand{\muNJ}{$\mmuNJ$}
\newcommand{\muNJmodulo}{\muNJ\ modulo}

\newcommand{\defeq}{\stackrel{\triangle}{=}}
\newcommand{\eqdef}{\stackrel{def}{=}}

\newcommand{\ra}{\rightarrow}

\newcommand{\rew}{\rightsquigarrow}
\newcommand{\ie}{{i.e.,}}
\newcommand{\eg}{{e.g.,}}

\newcommand{\pair}[1]{\langle #1 \rangle}
\newcommand{\unit}{\pair{}}

\newcommand{\subst}[1]{[ #1 ]}
\newcommand{\vin}{\hbox{\em in}}
\newcommand{\proj}{\hbox{\em proj}}
\newcommand{\fst}[1]{\hbox{\em proj}_1(#1)}
\newcommand{\snd}[1]{\hbox{\em proj}_2(#1)}
\newcommand{\csu}[1]{\hbox{\em csu}(#1)}
\newcommand{\textcsu}{\hbox{\em csu}}
\newcommand{\textcsus}{\hbox{\em csu}s}
\newcommand{\refl}[1]{\hbox{\em refl}}

\newcommand{\SN}{{\cal SN}}
\newcommand{\M}{{\cal M}}
\newcommand{\C}{{\cal C}}
\newcommand{\X}{{\cal X}}
\newcommand{\Y}{{\cal Y}}
\newcommand{\E}{{\cal E}}
\newcommand{\interp}[2][\E]{|#2|^{#1}}
\renewcommand{\F}[1]{F_{#1}}

\newcommand{\vdashm}{\vdash} 
\newcommand{\elimv}[3]{\delta_{\vee}({#1},{#2},{#3})}
\newcommand{\elimeq}[2]{\delta_{=}({#1},{#2})}
\newcommand{\elimex}[2]{\delta_{\exists}({#1},{#2})}
\newcommand{\elimbot}[1]{\delta_{\bot}({#1})}

\newcommand{\app}{{\;}}
\newcommand{\ap}{{\,}}
\newcommand{\set}[2]{\{\; #1 \;\vert\; #2 \;\}}
\newcommand{\all}[1]{\{\; #1 \;\}}

\newtheorem{definition}{Definition}
\newtheorem{lemma}{Lemma}
\newtheorem{theorem}{Theorem}
\newtheorem{proposition}{Proposition}

\newtheorem{notation}{Notation}

\title{Combining Deduction~Modulo and \\
       Logics of Fixed-Point Definitions}

\author{
  \IEEEauthorblockN{David Baelde}
  \IEEEauthorblockA{IT University of Copenhagen}
\and
  \IEEEauthorblockN{Gopalan Nadathur}
  \IEEEauthorblockA{University of Minnesota}
}

\begin{document}

\renewenvironment{proof}{\noindent {\it Proof:} }{\hspace*{\fill}~\IEEEQED\par}

\maketitle




\begin{abstract} Inductive and coinductive specifications are widely 
used in formalizing computational systems. Such specifications have a
natural rendition in logics that support fixed-point definitions.
Another useful formalization
device is that of recursive specifications. These specifications are
not directly complemented by fixed-point reasoning techniques and,
correspondingly, do not have to satisfy strong monotonicity
restrictions. We show how to incorporate a rewriting capability into 
logics of fixed-point definitions towards additionally supporting
recursive specifications. In particular, we describe 
a natural deduction calculus that adds a form of
``closed-world'' equality---a key ingredient to supporting fixed-point
definitions---to \emph{deduction modulo}, a framework for extending a
logic with a rewriting layer operating on formulas. We show that our
calculus enjoys strong normalizability when the rewrite system
satisfies general properties and we demonstrate its usefulness
in specifying and reasoning about syntax-based descriptions. The
integration of closed-world equality into deduction modulo leads us
to reconfigure the elimination principle for this form of equality
in a way that, for the first time, resolves issues regarding
the stability of finite proofs under reduction.
\end{abstract}

\section{Introduction}\label{sec:intro}

\noindent Fixed-point definitions constitute a widely used
specification device in computational settings. The process of 
reasoning about such definitions can be 
formalized within a logic by including a proof rule for introducing
predicates from their definition, and a case analysis rule for
eliminating such predicates in favor of the definitions through
which they might have been derived.
For example, given the following definition of natural numbers 
\[nat~0 \;\defeq\; \top\qquad nat~(s~x) \;\defeq\; nat~x\]
the introduction and elimination rules would respectively build in the
capabilities of recognizing natural numbers and of reasoning by case
analysis over them.
When definitional clauses are positive, they are
guaranteed to admit a fixed point and the logic 
can be proved to be consistent.
Further, least (resp.~greatest) fixed points can be 
characterized by adding an induction (resp.~coinduction) rule
to the logic.
These kinds of treatments have been added
to second-order logic~\cite{mendler91apal,matthes98csl},
type theory~\cite{paulin93tlca} and 
first-order
logics~\cite{schroeder-Heister93lics,mcdowell00tcs,momigliano08corr,tiu04phd}.  


The case analysis rule, which corresponds under the Curry-Howard
isomorphism to pattern matching in computations, is complex in many
formulations of the above ideas, and the (co)induction rules are even
more so.
By identifying and utilizing a suitable notion of equality, it is
possible to give these rules a simple and elegant rendition. 
For example, the two clauses for $nat$ can be transformed
into the following form:
\[nat~x \;\defeq\; x = 0 \vee \exists y.~ x = s~y \wedge nat~y \]
The case analysis rule can then be derived by unfolding a $nat$
hypothesis into its single defining clause and using elimination
rules for disjunction and equality.
However, to obtain the expected behavior, equality elimination
has to internalize aspects of term equality such as disjointness of
constructors; \eg\ the $0$ branch should be closed immediately if the
instantiation of $x$ has the form $s~n$.
The introduction of this separate notion of equality,
which we refer to as \emph{closed-world equality},
has been central to the concise formulation of generic
(co)induction rules~\cite{tiu04phd}.
Further, fixed-point combinators can be introduced
to make the structure of (co)inductive predicates explicit
rather than relying on a side table of definitions. Thus, the
(inductive) definition of natural numbers may simply be rendered as
$\mu\app (\lambda N \lambda x.~ x = 0 \vee \exists y.~ x = s~y \wedge
N\app y)$.
Fixed point combinators simplify and generalize the theory,
notably enabling mutual (co)induction schemes from the natural (co)induction
rules~\cite{baelde08phd,baelde12tocl}.
The logics resulting from this line of work,
which we refer to as logics of fixed-point definitions from now on,
have a simple structure that
is well-adapted to automated and interactive proof-search
\cite{baelde07cade,baelde10ijcar}.
Moreover, they can be combined with features such as generic
quantification that are useful in capturing binding structure to yield
calculi that are well-suited to formalizing the meta-theory of 
computational and logical systems~\cite{gacek09phd,gacek11ic,miller05tocl}.

Logics featuring (co)inductive definitions can be made 
more powerful by adding another genre of definitions: recursive definitions
based on inductive sets.
A motivating context for such definitions is provided by the Tait-style
strong normalizability 
argument~\cite{tait67jsl}, which figures often in the
meta-theory of computational systems. 
For the simply typed $\lambda$-calculus, this argument relies on
a \emph{reducibility} relation specified by the following clauses: 
\begin{eqnarray*}
   {\sl red}\ \iota\ e &\defeq& {\sl sn}\ e \\
   {\sl red}\ (t_1 \rightarrow t_2)\ e &\defeq&
       \forall e'.~ {\sl red}\ t_1\ e'\supset {\sl red}\ t_2\ (e\ e')
\end{eqnarray*}
We assume here that $\iota$ is the sole atomic type and that {\sl sn}
is a predicate that recognizes strong normalizability. The
specification of {\sl red} looks deceptively like a fixed-point
definition. However, treating it as such is problematic because
the second clause in the definition does not satisfy the positivity
condition.
More importantly, the Tait-style argument does not involve reasoning
on $red$ 
like we reason on fixed-point definitions.
Instead of performing case-analysis or induction on $red$,
properties are proved about it
using an (external) induction on types and the clauses
for {\sl red} mainly support an unfolding of the definition once the
structure of a type is  
known~\cite{girard89book}. Generally, recursive definitions are
distinguished by the fact that they embody computations
or rewriting within proofs rather than the case analysis and
speculative rewriting that is characteristic of fixed-point based
reasoning. 

In this paper, we show how to incorporate the capability of recursive 
definitions into logics of fixed-point definitions.
At a technical level, we do this 
by introducing least and greatest fixed points and
the idea of closed-world equality
into {\em deduction modulo}~\cite{dowek03jar},
a framework for extending a logic with
a rewriting layer that operates on formulas and terms. 
This rewriting layer allows for a transparent treatment of recursive
definitions, but a satisfactory encoding of closed-world equality
(and thus fixed-point definitions)
seems outside its reach.
This dichotomy actually highlights the different 
strengths of logics of fixed-point definitions and deduction modulo:
while the former constitute excellent vehicles for dealing with
(co)inductive definitions, the rewriting capability of the latter is
ideally suited for supporting recursive definitions. By extending
deduction modulo with closed-world equality and fixed points, we achieve a
combination of these strengths. This combination also clarifies the
status of our equality: we show that it is compatible with a theory on
terms and is thus richer than a simple ``syntactic'' form of
equality. 

The main technical result of this paper is a strong normalizability
property for our enriched version of deduction modulo. The seminal
work in this context is that of Dowek and Werner~\cite{dowek03jsl},
who provide a proof of strong normalizability for deduction modulo
that is modular with respect to the rewriting system being used. 
In the course of adapting this proof to our setting, we 
rework previous logical treatments of closed-world equality
 in a way that, for the first time, lets us require that proofs be
 finite  without sacrificing their stability under reduction.
For the resulting system, we are able to construct a proof of strong 
normalizability which follows very naturally the intended 
semantics of fixed-point and recursive definitions:
the former are interpreted as a whole using a semantic fixed-point, 
while the latter are interpreted instance by instance.
Regarding the normalization of least and greatest fixed-point constructs,
our work adapts that of Baelde~\cite{baelde12tocl}
from linear to intuitionistic logic.
We use a natural deduction style in presenting our logic that has 
the virtue of facilitating future investigations of connections with
functional programming.  

The rest of the paper is structured as follows.
In Section~\ref{sec:munj},
we motivate and present our logical system. 
Section~\ref{sec:munj_red} describes reductions on proofs.
Section~\ref{sec:norm} provides a proof of strong normalizability that
is modular in the rewrite rules being considered.
We use this result to facilitate recursive
  definitions in Section~\ref{sec:rec-defs} and we illustrate their
  use in formalizing the meta-theory of programming languages.
Section~\ref{sec:conclusion} discusses 
related and future work.

\section{Deduction Modulo with Fixed-Points and Equality} \label{sec:munj}

We present our extension to deduction modulo in the form of a typing
calculus for appropriately structured proof terms. This gives us a
convenient tool for defining proof reductions and proving strong
normalizability in later sections.


\subsection{Formalizing closed-world equality}

We first provide an intuition into our formalization of the desired
form of equality. The rule for introducing
an equality is the expected one: two terms are equal if they are
congruent modulo the 
operative rewriting relation. Denoting the congruence by $\equiv$,
this rule can simply be
\[ \infer[t \equiv t']{\Gamma \vdash t = t'}{} \]
The novelty is in the elimination rule that must encapsulate the
closed-world interpretation. This can be captured in the form of a case
analysis over all unifiers of the eliminated equality;
   the unifiers that are relevant to consider here would instantiate
   variables of universal strength, called eigenvariables, in the
   terms. 
One formulation of this idea that has been commonly used in the literature
is the following:
\[ \infer{\Gamma \vdash P}{
      \Gamma \vdash t = t' & \set{\Gamma\theta_i \vdash P\theta_i}{
         \theta_i \in \csu{t,t'}} } \] 
The notation $\csu{t,t'}$ is used here to denote a {\it complete set of
  unifiers} for $t$ and $t'$ modulo $\equiv$, \ie\ a set of
unifiers such that every unifier for the two terms is subsumed by a
member of the set. The closed world assumption is expressed
in the fact that $\Gamma \vdash P$ needs to be proved under only these
substitutions.
Note in particular that the set of right premises here is
empty when $t$
and $t'$ are not unifiable, \ie\ have no common instances.

The equality elimination rule could have simply used the
set of {\it all} unifiers for $t$ and $t'$. Basing it on
\textcsus\ instead allows the cardinality of the premise set to be
controlled, typically permitting it to be reduced to
a finite collection from an infinite one. However, a problem with the
way this rule is formulated is that this
property is not stable under substitution.
For example, consider the following derivation in which $x$ and $y$
are variables:
\[ \infer{p\app x, x = y \vdash p\app y}
         {p\app x, x = y \vdash x = y & (p\app x, x = y)[y/x] \vdash
           (p\app y)[y/x]}\]
If we were to apply the substitution $[t_1/x, t_2/y]$ to it,
the branching structure of the derivation would have to be changed to
reflect the nature of a \textcsu\ for $t_1$ and $t_2$; this could well
be an infinite set. A related problem
manifests itself when we need to substitute a proof $\pi$
for an assumption into the derivation.
If we were to work the proof substitution
eagerly through each of the premises in the equality elimination
rule, it would be necessary to modify the structure of $\pi$ to accord
with the term substitution that indexes each of the premise
derivations.
In the context of deduction modulo, the instantiation in $\pi$
can create new opportunities for rewriting formulas.
Since the choice of the ``right'' premise cannot be determined upfront,
the eager propagation of proof substitutions into equality eliminations
can lead to a form of speculative rewriting which,
as we shall see, is problematic when recursive definitions are
included. 

We avoid these problems by formulating equality elimination in a way
that allows for the {\it suspension} of term and 
proof substitutions. Specifically, this rule is 
\[ \infer{\Gamma' \vdash P\theta}{
      \Gamma' \vdash t\theta = t'\theta &
      \Gamma' \vdash \Gamma\theta &
      \set{\Gamma\theta_i \vdash P\theta_i}{\theta_i \in \csu{t,t'}} } \]
Here, $\Gamma' \vdash \Gamma\theta$ means that there is a derivation
of $\Gamma' \vdash Q$ for any $Q \in \Gamma\theta$.
This premise, that introduces a form of {\it cut}, 
allows us to delay the propagation of proof substitutions over the
premises that represent 
the case analysis part of the rule. Notice 
also that we consider \textcsus\ for $t$ and $t'$ and not $t\theta$ and
$t'\theta$ over these premises, \ie\ the application of the
substitution $\theta$ is also suspended. Of course, these
substitutions must eventually be applied. Forcing the application
becomes the task of the reduction rule for equality that also
simultaneously selects the right branch in the case analysis.

Our equality elimination rule also has the pleasing property of
allowing the structure of proofs to be preserved under
substitutions. For example, the proof 
\[ \infer{p\app x, x = y \vdash p\app y}
         {\infer{\vphantom{[]}p\app x, x = y \vdash x = y}{} &
          \infer{\vphantom{[]}p\app x, x = y \vdash p\app x}{} &
          \infer{(p\app x)[y/x] \vdash (p\app y)[y/x]}{}}\]
under the substitution $\theta := [t_1/x,t_2/y]$ becomes 
\[ \infer{\Gamma \vdash p\app t_2}
         {\infer{\Gamma \vdash (x = y)\theta}{} &
          \infer{\Gamma \vdash (p\app x)\theta}{} &
          \infer{(p\app x)[y/x] \vdash (p\app y)[y/x]}{}}\]
where $\Gamma = (\; p \app t_1,\; t_1 = t_2 \;)$.

\subsection{The logic \muNJmodulo} \label{sec:munj_def}


%

The syntax of our formulas is based on a language of typed $\lambda$-terms.
We do not describe this language in detail and assume only that it is
equipped with 
standard notions of variables and substitutions.
We distinguish $o$ as the type of propositions. Term types,
denoted by $\gamma$, are ones that do not contain $o$.
Predicates are expressions of type $\gamma_1\ra\ldots\ra\gamma_n \ra o$.
Both formulas and predicates are denoted by $P$ or $Q$.
We use $p$ or $q$ for predicate variables and $a$ for predicate constants.
Terms are expressions of term types, and shall be denoted
by $t$, $u$ or $v$. We use $x$, $y$ or $z$ for term variables.
All expressions are considered up to $\beta$- and $\eta$-conversion.
In addition to that basic syntactic equality,
we assume a congruence relation $\equiv$.
In Section~\ref{sec:rec-defs}, we will describe conditions on such a
congruence relation that are sufficient for ensuring the consistency of
the logic. 


\begin{definition}
A unifier of $u$ and $v$ is a substitution $\theta$ such that
$u\theta \equiv v\theta$.
A \emph{complete set of unifiers} for $u$ and $v$, written
$\csu{u,v}$ is a set $\all{\theta_i}_i$ of unifiers of $u$ and $v$, 
such that any other unifier of $u$ and $v$ is of the form
$\theta_i\theta'$ for some $i$ and $\theta'$. Note that complete sets
of unifiers may not be unique. However, this ambiguity will be harmless
in our setting.
\end{definition}

\begin{definition} 
Formulas are built as follows:
\begin{eqnarray*}
P &::=& \top \| \bot \| P \supset Q \| P \wedge Q \| P \vee Q
     \| \forall x. P \| \exists x. P \| \\
  &  & t = t'
      \| (\mu\ B\ \t) \| (\nu\ B\ \t) \| (p\ \t) \| (a\ \t)
\end{eqnarray*}
Here, $\wedge$, $\vee$ and $\supset$ are connectives of type $o \ra o
\ra o$, equality has type $\gamma \ra \gamma \ra o$
and quantifiers have type $(\gamma \ra o) \ra o$
for any $\gamma$.
Expressions of the form $a\ap\t$ are called \emph{atomic formulas}.
The least and greatest fixed point combinators $\mu$ and $\nu$
have the type $(\tau \ra \tau) \ra \tau$ for any
$\tau$ of the form $\vec{\gamma} \ra o$.
The first argument for these combinators, denoted by $B$, must have
the form $\lambda p \lambda \x. P$ called a predicate operator.
Every predicate variable occurrence must be within such an operator,
bound by the first abstraction in it. 
An occurrence of $p$ in a formula is \emph{positive} if it is on the 
left of an even number of implications, and it is \emph{negative}
otherwise and $\lambda p \lambda \x. P$ is said to be monotonic
(resp.~antimonotonic) if $p$ occurs only positively (resp.~negatively)
in $P$. We restrict the first argument of fixed-point combinators to
be monotonic operators.
\end{definition}


\begin{figure*}[!t]
\renewcommand{\arraystretch}{2}
\[ \begin{array}{r>{\quad}l}
  \infer[P\equiv Q,\, (\alpha:Q)\in\Gamma]{\Gamma\vdashm \alpha:P}{} &
  \infer[P\equiv \top]{\Gamma \vdashm \unit : P}{}
  \qquad
  \infer{\Gamma\vdashm \elimbot{\pi}:P}{\Gamma\vdashm \pi:\bot}
  \\
  \infer[P\equiv P_1 \supset P_2]{\Gamma\vdashm \lambda\alpha.\pi:P}{
      \Gamma, \alpha:P_1 \vdashm \pi:P_2} &
  \infer{\Gamma\vdashm \pi\app\pi':P}{
      \Gamma\vdashm \pi:Q \supset P & \Gamma\vdashm \pi':Q} \\
  \infer[P\equiv P_1\wedge P_2]{\Gamma\vdashm \pair{\pi_1,\pi_2}:P}{
      \Gamma\vdashm \pi_1:P_1 & \Gamma\vdashm \pi_2:P_2} &
  \infer[P'_i \equiv P_i,\, i \in \{1,2\}]{\Gamma \vdashm \hbox{\em proj}_i(\pi):P'_i}{
      \Gamma\vdashm \pi:P_1 \wedge P_2} \\
  \infer[P\equiv P_1\vee P_2]{\Gamma \vdashm \vin_i(\pi):P}{
      \Gamma \vdashm \pi:P_i}
  &
  \infer{
      \Gamma \vdashm \elimv{\pi}{\alpha.\pi_1}{\beta.\pi_2}:P}{
      \Gamma \vdashm \pi:P_1\vee P_2 &
      \Gamma, \alpha:P_1 \vdashm \pi_1:P &
      \Gamma, \beta:P_2 \vdashm \pi_2:P}
  \\
  \infer[P \equiv \forall x. Q]{\Gamma\vdashm \lambda x.\pi : P}{
      \Gamma\vdashm \pi:Q} &
  \infer[P \equiv Q\subst{t/x}]{
      \Gamma\vdashm \pi\app t : P}{
      \Gamma\vdashm \pi: \forall x. Q} \\
  \infer[P \equiv \exists x. Q]{
      \Gamma\vdashm \pair{t,\pi}:P}{
      \Gamma\vdashm\pi:Q\subst{t/x}} &
  \infer{\Gamma\vdashm \elimex{\pi}{x.\alpha.\pi'}:P}{
           \Gamma\vdashm \pi: \exists x. Q &
           \Gamma,\alpha:Q\vdashm\pi':P}
  \\
  \infer[P \equiv (t=t)]{\Gamma\vdashm \refl{t}:P}{}
  &
  \infer[(\theta'_i)_i \in \csu{u,v}, P \equiv Q\theta]{
      \Gamma' \vdashm
         \elimeq{\Gamma,\theta,\sigma,u,v,Q,\pi}{(\theta'_i.\pi_i)_i}:P}{
      \Gamma' \vdashm \pi : u\theta=v\theta &
      \Gamma' \vdash \sigma : \Gamma\theta &
      ( \Gamma\theta'_i \vdashm \pi_i:Q\theta'_i )_i}
  \\
  \infer[P \equiv \mu\app B\app \t]{
      \Gamma \vdashm \mu(B,\t,\pi) : P}{\Gamma \vdashm \pi:B\app (\mu\app
    B)\app \t}
  &
  \infer[P\equiv S\app \t]{
      \Gamma \vdashm \delta_\mu(\pi,\x.\alpha.\pi'):P}{
      \Gamma \vdashm \pi : \mu\app B\app \t &
      \Gamma, \alpha : B\app S\app \x \vdash \pi' : S\app \x}
  \\
  \infer[P\equiv \nu\app B\app \t]{
      \Gamma \vdashm \nu(\pi,\x.\alpha.\pi') : P}{
      \Gamma \vdashm \pi : S\app \t &
      \Gamma, \alpha : S\app \x \vdashm \pi' : B\app S\app \x}
  &
  \infer[P\equiv B\app(\nu\ap B)\app\t]{
      \Gamma \vdashm \delta_\nu(B,\t,\pi) : P}{
      \Gamma \vdashm \pi : \nu \ap B \ap \t}
\end{array} \]
Variables bound in proof terms are assumed to be new in instances of typing
rules, \ie\ they should not occur free in the base sequent.
Specifically, $\alpha$, $\beta$, $x$ are assumed to be new
in the introduction rules for implication, universal quantification
and greatest fixed-point,
as well as elimination rules for disjunction, existential quantification,
equality and least fixed-point.
\caption{\muNJ: Natural deduction modulo with equality and
   least and greatest fixed points}
\label{fig:munj}
\end{figure*}

We now introduce a language of proof terms, and define type assignment.
The terms and typing rules for all but the equality and fixed point
cases are standard (\eg\ see  
\cite{dowek03jsl}). Following the Curry-Howard correspondence,
(proof-level) types correspond to formulas,
typing derivations correspond to proofs,
and the reduction of proof terms corresponds to proof normalization.
The guidelines determining the form of the new proof terms 
are that all information needed for reduction should be included in
them and that type checking should be easily decidable. The details of
our choices should become clear when we present the typing rules.

\begin{definition}\label{proofterms} 
Proof terms, denoted by $\pi$ and $\rho$, are given by the following
syntax rules: 
\begin{eqnarray*}
\pi &::=& \alpha \| \unit{} \| \elimbot{\pi} \\
    &|& \lambda\alpha.\pi \| (\pi\app\pi') \\
    &|& \pair{\pi,\pi'} \| \fst{\pi} \| \snd{\pi} \\
    &|& \vin_1(\pi) \| \vin_2(\pi) \|
        \elimv{\pi_1}{\alpha.\pi_2}{\beta.\pi_3} \\
    &|& \lambda x.\pi \| (\pi\app t) \\
    &|& \pair{t,\pi} \| \elimex{\pi}{x.\alpha.\pi'} \\
    &|& \refl{t} \|
        \elimeq{\Gamma,\theta,\sigma,u,v,P,\pi}{(\theta'_i.\pi_i)_i} \\
    &|& \mu(B,\t,\pi) \| \delta_\mu(\pi,\x.\alpha.\pi') \\
    &|& \nu(\pi,\alpha.\pi') \|
        \delta_\nu(B,\t,\pi)
\end{eqnarray*}
Here and later,
we use $\alpha$, $\beta$, $\gamma$ to denote proof variables,
and $\sigma$ to denote substitutions for proof variables.
The notation $(\theta'_i.\pi_i)_i$ in the equality elimination construct
stands for a finite, possibly empty, collection of subterms. 
In the expression $\theta.\pi$, all free variables of $\pi$
must be in the range of the substitution $\theta$. 
%
Finally, the notation $x.\pi$ or $\alpha.\pi$ denotes a binding construct,
\ie\ $x$ (resp.~$\alpha$) is bound in $\pi$.
As usual, terms are identified up to a renaming of bound variables,
and renaming is used to avoid capture when propagating a substitution under a
binder.
\end{definition}



Typing judgments are relativized to {\em contexts} that are
assignments of types to finite sets of proof variables. We denote
contexts by $\Gamma$, written perhaps with subscripts and
superscripts.  

\begin{definition}
A proof term $\pi$ has type $P$ under the context $\Gamma$
if $\Gamma\vdashm \pi:P$ is derivable using the rules in 
Figure~\ref{fig:munj}.
We also say that $\Gamma' \vdashm \sigma : \Gamma$ holds
if $\Gamma$ and $\sigma$ have the same domain and
$\Gamma' \vdashm \sigma(\alpha) : \Gamma(\alpha)$
holds for each $\alpha$ in that domain.
\end{definition}

%


\subsection{Expressiveness of the logic} \label{sec:munj_xpl}

The logic \muNJmodulo\ inherits from
logics of fixed-point definitions a simplicity in the treatment of
(co)inductive sets and relations and from deduction modulo the ability
to blend computation and deduction in the course of reasoning. We
illustrate this aspect through a few simple examples here.

Natural numbers may be specified through the following least 
fixed point predicate:
\[ nat \eqdef \mu\app (\lambda N \lambda x.~ x = 0 \vee \exists y.~ x =
s~y \wedge N~y) \]
Specialized for this predicate, the least fixed point rules immediately
give rise to the following standard derived rules:
\[ \infer{\Gamma \vdash nat~0}{} \quad
   \infer{\Gamma \vdash nat~(s~x)}{\Gamma \vdash nat~x} \]
\[ \infer[y \mbox{ new}]{\Gamma \vdash P~x}{
     \Gamma \vdash nat~x &
     \Gamma \vdash P~0 & \Gamma, P~y \vdash P~(s~y)} \]

Having natural numbers, we can easily obtain the rest of Heyting arithmetic.
Addition may be defined as an inductive relation,
but the congruence also allows it to be defined more naturally
as a term-level function,
equipped with the rewrite rules $0+y \rew y$ and $(s~x)+y \rew
s~(x+y)$. Treating it in the latter way allows us to exploit the
standard dichotomy between deduction and computation in deduction
modulo to shorten proofs~\cite{burel07csl}. For example, $(s~0)+(s~0) = s~(s~0)$
can be proved in one step by using the fact that the two terms in the
equation are congruent to each other. 
More general properties about addition defined in this way must be
conditioned by assumptions about the structure of the terms.
For instance, commutativity of addition should be stated as follows:
\[ \forall x \forall y.~ nat~x \supset nat~y \supset x+y = y+x \]
This proposition can be proved by induction on the $nat$ hypotheses,
with the computation of addition being performed implicitly
through the congruence when the structure of the first summand becomes
known. 
Note that we do not have to know how to compute \textcsus{} modulo arithmetic
to build that derivation: all that is needed is the substitutivity 
principle $\forall x \forall y.~ x = y \supset P~x \supset P~y$
which only involves shallow unification.


%


\section{Reductions on Proof Terms}\label{sec:munj_red}

\begin{figure*}[!t]
\[ \begin{array}{rcl}
 (\elimeq{\Gamma,\theta',\sigma,u,v,P,\pi}{(\theta''_i.\pi_i)_i})\theta
 &\eqdef&
 \elimeq{\Gamma,\theta'\theta,\sigma\theta,u,v,P,\pi\theta}{
            (\theta''_i.\pi_i)_i}
\\
 (\elimeq{\Gamma,\theta',\sigma',u,v,P,\pi}{(\theta''_i.\pi_i)_i})\sigma
 &\eqdef&
    \elimeq{\Gamma,\theta',\sigma'\sigma,u,v,P,\pi\sigma}{
      (\theta''_i.\pi_i)_i}
\end{array} \]
\[ \begin{array}{rclrcl}
 (\mu(B,\t,\pi))\theta &\eqdef& \mu(B\theta,\t\theta,\pi\theta)
&
 (\delta_\mu(\pi,\x.\alpha.\pi'))\theta
 &\eqdef&
 \delta_\mu(\pi\theta,\x.\alpha.\pi'\theta)
\\
 (\mu(B,\t,\pi))\sigma &\eqdef& \mu(B,\t,\pi\sigma)
&
 (\delta_\mu(\pi,\x.\alpha.\pi'))\sigma
 &\eqdef&
 \delta_\mu(\pi\sigma,\x.\alpha.\pi'\sigma)
\end{array} \]
\[ \begin{array}{rclrcl}
 (\nu(\pi,\x.\alpha.\pi'))\theta &\eqdef& \nu(\pi\theta,\x.\alpha.\pi'\theta)
&
 (\delta_\nu(B,\t,\pi))\theta
 &\eqdef&
 \delta_\nu(B\theta,\t\theta,\pi\theta)
\\
 (\nu(\pi,\x.\alpha.\pi'))\sigma &\eqdef& \nu(\pi\sigma,\x.\alpha.\pi'\sigma)
&
(\delta_\nu(B,\t,\pi))\sigma
 &\eqdef&
\delta_\nu(B,\t,\pi\sigma)
\end{array} \]
\caption{Term and proof-level substitutions into equality, least and greatest fixed-point
  proof terms}
\label{fig:substitution}
\end{figure*}

As usual, we consider reducing proof terms in
which an elimination rule for a logical symbol immediately follows an
introduction rule for the same symbol. Substitutions for both
term-level and proof-level variables play an important role in
describing such reductions. They are defined as usual,
extended as shown on Figure~\ref{fig:substitution} for equality and 
for the least and greatest fixed-point constructs.
Note that substitutions are
suspended over the parts representing case analysis in the equality
elimination rule as discussed earlier. 
The next two lemmas show that this treatment of substitution is
coherent. 

\begin{lemma} \label{lem:termsubstred}
Term-level substitution preserves type assignment:
$\Gamma\vdashm \pi:P$ implies $\Gamma\theta\vdashm \pi\theta:P\theta$.
\end{lemma}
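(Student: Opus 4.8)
The plan is to proceed by structural induction on the derivation of $\Gamma \vdashm \pi : P$, showing that applying the term-level substitution $\theta$ to the entire judgment yields a valid derivation of $\Gamma\theta \vdashm \pi\theta : P\theta$. The induction is on the height of the typing derivation, and in each case I would take the last rule applied, invoke the induction hypothesis on the premises (now with $\theta$ applied), and reassemble an instance of the same rule at the conclusion. Throughout, I rely on the fact that substitution commutes with the congruence $\equiv$ in the sense that $P \equiv Q$ implies $P\theta \equiv Q\theta$ (the congruence is closed under substitution, since it is a congruence on terms and formulas), which is what lets the side-conditions of the form $P \equiv Q$ survive the application of $\theta$. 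I also use that formula-level and term-level substitution commute appropriately, e.g.\ $(Q\subst{t/x})\theta \equiv (Q\theta)\subst{(t\theta)/x}$ modulo the usual freshness of bound variables, so that the universal-elimination and existential-introduction cases go through.

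For the standard logical connectives---$\top$, $\bot$, $\supset$, $\wedge$, $\vee$, $\forall$, $\exists$---the argument is entirely routine and matches the classical treatment in \cite{dowek03jsl}: each proof-term constructor commutes with $\theta$ by definition of substitution, and the only care needed is the renaming of bound proof- and term-variables to avoid capture, which the convention on fresh variables already guarantees. The fixed-point cases for $\mu$ and $\nu$ are handled using the substitution equations displayed in Figure~\ref{fig:substitution}: for instance, in the $\mu$-introduction rule the conclusion $\mu(B,\t,\pi)$ becomes $\mu(B\theta,\t\theta,\pi\theta)$, and the induction hypothesis gives a derivation of $\Gamma\theta \vdashm \pi\theta : (B\app(\mu\app B)\app\t)\theta$; I then check that $(B\app(\mu\app B)\app\t)\theta \equiv B\theta\app(\mu\app B\theta)\app\t\theta$, which holds because $\mu$ and the application structure commute with $\theta$. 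The elimination rules $\delta_\mu$ and $\delta_\nu$ are similar, with the invariant predicate $S$ being carried through unchanged in the sense that $S$ is a closed operator whose instances at $\x$ are governed by the fresh-variable convention.

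The genuinely delicate case---and the one I expect to be the main obstacle---is equality elimination. Here the conclusion is
\[ \Gamma' \vdashm \elimeq{\Gamma,\theta',\sigma,u,v,Q,\pi}{(\theta'_i.\pi_i)_i} : Q\theta', \]
and by the first equation of Figure~\ref{fig:substitution} applying the outer substitution $\theta$ affects only the suspended components: the recorded substitution $\theta'$ becomes $\theta'\theta$, the cut witness $\sigma$ becomes $\sigma\theta$, and the principal premise proof $\pi$ becomes $\pi\theta$, while the case-analysis branches $(\theta'_i.\pi_i)_i$ and the unification data $u,v$ are left untouched. This is exactly the point of suspending substitutions over the case-analysis premises, and it is what makes the lemma provable at all: because $\csu{u,v}$ is computed for the original $u,v$ rather than for $u\theta,v\theta$, the branch structure is preserved, and I only need the induction hypothesis on the two non-suspended premises $\Gamma' \vdashm \pi : u\theta'=v\theta'$ and $\Gamma' \vdash \sigma : \Gamma\theta'$ to obtain $\Gamma'\theta \vdashm \pi\theta : (u\theta'=v\theta')\theta$ and $\Gamma'\theta \vdash \sigma\theta : (\Gamma\theta')\theta$. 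The essential calculation is then the bookkeeping identity $(u\theta')\theta = u(\theta'\theta)$ on the suspended substitution, together with $(Q\theta')\theta = Q(\theta'\theta)$ on the conclusion, so that the reassembled term is a well-formed instance of the same rule with recorded substitution $\theta'\theta$. The branches $\pi_i$, being typed under $\Gamma\theta'_i$ with the original $\theta'_i$, require no modification, which is precisely the stability-under-substitution property advertised in the preceding discussion; verifying that the side-condition $(\theta'_i)_i \in \csu{u,v}$ remains satisfied is immediate since neither $u$, $v$, nor the $\theta'_i$ are changed by $\theta$.
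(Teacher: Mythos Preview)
Your proposal is correct and follows essentially the same approach as the paper: an induction on the typing derivation, with the equality-elimination case singled out as the interesting one and handled exactly as you describe---applying the induction hypothesis only to the non-suspended premises $\pi$ and $\sigma$, composing the recorded substitution to $\theta'\theta$, and leaving the branch data $(\theta'_i.\pi_i)_i$ and the $\csu{u,v}$ side-condition untouched. The paper's proof is much terser but makes precisely the same moves.
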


\begin{proof}
This is easily checked by induction on the typing derivation. An
interesting case is that of equality elimination.
Consider the following derivation:
\[ \infer[P\equiv P'\theta']{
     \Gamma' \vdashm
         \elimeq{\Gamma,\theta',\sigma,u,v,P',\pi}{(\theta''_i.\pi_i)_i} : P}{
             \Gamma' \vdashm \pi : u\theta' = v\theta' &
             \Gamma' \vdash \sigma : \Gamma\theta' &
             ( \Gamma\theta''_i \vdashm \pi_i : P'\theta''_i )_i
             } \]
By the induction hypothesis,
$\Gamma'\theta \vdash \pi\theta : u\theta'\theta=v\theta'\theta$
and $\Gamma'\theta \vdash \sigma\theta : \Gamma\theta'\theta$ have
derivations. From these we build the derivation
\[ \infer{
     \Gamma'\theta \vdashm
        \elimeq{\Gamma,\theta'\theta,\sigma\theta,u,v,P',\pi\theta}{
                (\theta''_i.\pi_i)_i} : P\theta}{
            \Gamma'\theta
              \vdashm \pi\theta : u\theta'\theta=v\theta'\theta &
            \Gamma'\theta \vdashm \sigma\theta : \Gamma\theta'\theta &
            ( \Gamma\theta''_i \vdashm \pi_i : P'\theta''_i )_i
            } \]
\end{proof}

\begin{lemma}\label{lem:proofsubstred}
If $\Gamma \vdashm \pi : P$ and $\Gamma' \vdashm \sigma : \Gamma$
then $\Gamma' \vdash \pi\sigma : P$. 
\end{lemma}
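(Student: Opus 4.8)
The plan is to prove Lemma~\ref{lem:proofsubstred} by structural induction on the typing derivation of $\Gamma \vdashm \pi : P$, showing at each step that the proof-level substitution $\sigma$ (satisfying $\Gamma' \vdashm \sigma : \Gamma$) can be pushed through the proof term while preserving the type $P$. For each typing rule, I would inspect the corresponding clause of Figure~\ref{fig:substitution} (or the standard clause, for the connectives not listed there) defining how $\sigma$ acts on the proof term, and verify that the induction hypotheses applied to the premises reconstruct a valid instance of the same rule with the substituted proof term. The crucial bookkeeping observation is that $\sigma$ is a \emph{proof}-level substitution, so it does not touch the types $P$ themselves nor any term-level components: this is why the conclusion retains exactly $P$ rather than some $P\sigma$, in contrast to the term-level statement of Lemma~\ref{lem:termsubstred}.

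First I would dispatch the routine cases. For a variable $\alpha$ with $(\alpha:P)\in\Gamma$ (up to $\equiv$), the substituted term is $\sigma(\alpha)$, and the hypothesis $\Gamma' \vdashm \sigma : \Gamma$ directly gives $\Gamma' \vdashm \sigma(\alpha) : \Gamma(\alpha)$, whence the result by $P \equiv \Gamma(\alpha)$. For the introduction and elimination rules of $\top$, $\bot$, $\supset$, $\wedge$, $\vee$, $\forall$, $\exists$, $\mu$ and $\nu$, the substitution commutes with the term constructor in the expected way, and applying the induction hypothesis to each premise yields premises of the shape required to reapply the rule. Some care is needed with rules that bind a proof variable---namely $\lambda\alpha.\pi$, the disjunction, existential, and least-fixed-point eliminations---where one must first rename the bound variable to avoid capture and then extend $\sigma$ to act as the identity on the freshly added assumption; the side condition on Figure~\ref{fig:munj} that bound variables are new makes this harmless. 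I would also note that $\sigma$ leaves term-level binders and the term arguments $\t$ untouched, so the premises about types such as $B\app S\app\x$ carry over unchanged.

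The main obstacle is the equality elimination rule, exactly because Figure~\ref{fig:substitution} defines $\sigma$ to act only on the $\sigma'$ and $\pi$ components and to leave the suspended case-analysis branches $(\theta''_i.\pi_i)_i$ entirely untouched. Concretely, starting from a derivation of
\[ \Gamma' \vdashm \elimeq{\Gamma,\theta',\sigma',u,v,P',\pi}{(\theta''_i.\pi_i)_i} : P \]
with premises $\Gamma' \vdashm \pi : u\theta'=v\theta'$, $\Gamma' \vdash \sigma' : \Gamma\theta'$, and $(\Gamma\theta''_i \vdashm \pi_i : P'\theta''_i)_i$, applying a proof substitution $\sigma$ with $\Gamma'' \vdashm \sigma : \Gamma'$ produces the term with $\pi$ replaced by $\pi\sigma$ and $\sigma'$ replaced by $\sigma'\sigma$. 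The first premise follows from the induction hypothesis applied to $\pi$ (its type is unchanged since $\sigma$ is proof-level), and the branch premises are literally unchanged and so remain derivable. The delicate point is the cut-style premise $\Gamma'' \vdash \sigma'\sigma : \Gamma\theta'$: here I would argue that $\sigma'\sigma$ is the composite proof substitution, and that $\Gamma'' \vdashm \sigma : \Gamma'$ together with $\Gamma' \vdash \sigma' : \Gamma\theta'$ gives, componentwise, $\Gamma'' \vdashm \sigma'(\alpha)\sigma : (\Gamma\theta')(\alpha)$ by the induction hypothesis applied to each $\sigma'(\alpha)$. This is precisely the payoff of the suspension design discussed in Section~\ref{sec:munj}: because the case-analysis branches are frozen, the proof substitution need not be propagated into them, and the only work is reassembling the cut premise. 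I expect this to be the sole case requiring genuine thought; everything else is a mechanical commutation of $\sigma$ with the term structure.
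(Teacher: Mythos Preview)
Your proposal is correct and follows essentially the same approach as the paper: an induction on the typing derivation, with the only case deserving attention being equality elimination, handled exactly as you describe by applying the induction hypothesis componentwise to obtain $\Gamma'' \vdash \sigma'\sigma : \Gamma\theta'$ while leaving the frozen branches $(\theta''_i.\pi_i)_i$ untouched. The paper's own proof is terser but identical in structure; your additional remarks on binder renaming and the fact that $\sigma$ leaves term-level data unchanged are accurate and merely make explicit what the paper leaves implicit.
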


\begin{proof}
This is shown also by induction on the typing derivation. An
interesting case, again, is that of equality elimination.
Consider the following derivation:
\[ \infer[P\equiv P'\theta]{
     \Gamma
     \vdashm \elimeq{\Gamma'',\theta,\sigma',u,v,P',\pi}{(\theta'_i.\pi_i)_i}:P}{
       \Gamma \vdashm \pi : u\theta = v\theta &
       \Gamma \vdashm \sigma' : \Gamma''\theta &
       ( \Gamma''\theta'_i \vdashm \pi_i : P'\theta'_i )_i
       } \]
By the induction hypothesis,
$\Gamma' \vdash \pi\sigma : u\theta=v\theta$
and
$\Gamma' \vdash \sigma'\sigma : \Gamma''\theta$ have derivations.
From this we build the derivation
\[ \infer{
     \Gamma' \vdashm
        \elimeq{\Gamma'',\theta,\sigma'\sigma,u,v,P',\pi\sigma}{
                (\theta'_i.\pi_i)_i} : P}{
            \Gamma' \vdashm \pi\sigma : u\theta=v\theta &
            \Gamma' \vdash \sigma'\sigma : \Gamma''\theta &
            ( \Gamma''\theta'_i \vdashm \pi_i : P'\theta'_i )_i
            } \]
\end{proof}

The most interesting reduction rules are those for
the least and greatest fixed-point operators. In the former case, the
rule must apply to a proof of the form
\[\infer{\Gamma \vdash \delta_\mu(\mu(B,\t,\pi),\x.\alpha.\pi'):S\app \t}{
      \infer{\Gamma \vdash \mu(B,\t,\pi):\mu\app B\app \t}
            {\Gamma \vdash \pi:B\app(\mu\app B)\app \t} &
      \Gamma, \alpha : B\app S\app \x \vdash \pi' : S\app \x}
\]
This redex can be eliminated by generating a proof of $\Gamma \vdash
S\app \t$ directly from the derivation of $\Gamma \vdash \pi: B\app
(\mu\app B)\app \t$: doing this effectively means that we move the
redex (cut) deeper into the iteration that introduces the least fixed
point. To realize this transformation, we proceed as follows:
\begin{itemize} 
\item Using the derivation $\pi'$,
  we can get a proof of $S\app \t$ from $B\app S\app
  \t$. Thus, the task reduces to generating a proof of $B\app S\app
  \t$ from $B\app (\mu\app B)\app \t$.

\item Using again $\pi'$,
  we get a derivation for $\Gamma, \beta:\mu\app B\app \x \vdash
  \delta_\mu(\beta,\x.\alpha.\pi'):S\app \x$. If we can show how to
  ``lift'' this derivation over the operator $\lambda p. (B \app p\app
  \t)$, we obtain the needed derivation of $B\app S\app
  \t$ from $\pi : B\app (\mu\app B)\app \t$.
\end{itemize}
For the latter step, we use the notion of
{\it functoriality}~\cite{matthes98csl}. For any
monotonic operator $B$, we define the functor $F_B$ for which
the following typing rule is admissible:
\[ \infer{\Gamma \vdashm \F{B}(\x.\alpha.\pi) : (B\app P) \supset (B\app P')}
         {\Gamma, \alpha : P\app \x \vdash \pi : P'\app \x} \]

\begin{definition}[Functoriality, $\F{B}(\pi)$]
Let $B$ be an operator of type $(\vec{\gamma}\ra o)\ra o$,
and $\pi$ be a proof such that
$\alpha:P\app \x \vdash \pi:P'\app \x$.
We define $\F{B}^{+}(\x.\alpha.\pi)$ of type $B\app P \supset B\app P'$
for a monotonic $B$
and $\F{B}^{-}(\x.\alpha.\pi)$ of type $B\app P' \supset B\app P$
for an antimonotonic $B$
by induction on the maximum depth of an
occurrence of $p$ in $B \ap p$ through the rules in
Figure~\ref{fig:functoriality}. In these rules, $*$ denotes any
polarity ($+$ or $-$) and $-{*}$ denotes the complementary one.
We write $\F{B}^{+}(\x.\alpha.\pi)$ more simply as $\F{B}(\x.\alpha.\pi)$.
\end{definition}

\begin{figure*}[!t]
\[ \begin{array}{rcl}
  \multicolumn{3}{c}{
  \F{\lambda p. p\t}^{+}(\x.\alpha.\pi) = \lambda\alpha.\pi\subst{\t/\x} 
  \qquad\qquad
  \F{\lambda p. Q}^{*}(\x.\alpha.\pi) = \lambda\beta.\beta
    \mbox{ if $p$ does not occur in $Q$}
  } \quad \\
  \F{\lambda p. (B_1 \ap p) \wedge (B_2 \ap p)}^{*}(\x.\alpha.\pi) &=& 
    \lambda\beta.\pair{
                 \F{B_1}^{*}(\x.\alpha.\pi)\app (\fst{\beta}),
                 \F{B_2}^{*}(\x.\alpha.\pi)\app (\snd{\beta})} \\
  \F{\lambda p. (B_1 \ap p) \vee (B_2 \ap p)}^{*}(\x.\alpha.\pi) &=& 
    \lambda\beta.\elimv{\beta}{\gamma.\vin_1(\F{B_1}^{*}(\x.\alpha.\pi)\app
      \gamma)}{
                               \gamma.\vin_2(\F{B_2}^{*}(\x.\alpha.\pi)\app
                               \gamma)}
    \\ 
  \F{\lambda p. (B_1 \ap p) \supset (B_2 \ap p)}^{*}(\x.\alpha.\pi) &=&
    \lambda\beta. \lambda \gamma.
       \F{B_2}^{*}(\x.\alpha.\pi) \app (\beta \app (\F{B_1}^{-{*}}(\x.\alpha.\pi)\app \gamma)) \\
  \F{\lambda p. \forall x. (B\ap p \ap x)}^{*}(\x.\alpha.\pi) &=&
    \lambda\beta. \lambda x. \F{\lambda p. B\ap p\ap x}^{*}(\x.\alpha.\pi) \app (\beta \app x) \\
  \F{\lambda p. \exists x. (B \ap p \ap x)}^{*}(\x.\alpha.\pi) &=&
    \lambda\beta. \elimex{\beta}{
                     x.\gamma. \pair{x, \F{\lambda p. B\ap p\ap x}^{*}(\x.\alpha.\pi) \app \gamma}} \\
  \F{\lambda p. \mu\ap (B\app p)\ap \t}^{*}(\x.\alpha.\pi) &=&
    \lambda\beta. \delta_\mu(\beta, \x.\gamma. \mu (B\app P',\x,
      \F{\lambda p. B\app p\app (\mu\ap (B\app P'))\ap \x}^{*}(\x.\alpha.\pi)\app \gamma))
\\
  \F{\lambda p. \nu\ap (B\app p)\ap \t}^{*}(\x.\alpha.\pi) &=&
    \lambda\beta. \nu(\beta, \x.\gamma.
      \F{(\lambda p. B\app p\app (\nu\ap (B\app P))\ap \x)}^{*}(\x.\alpha.\pi)\app \delta_\nu(B\app P,\x,\gamma))
\end{array} \]
\caption{Definition of functoriality}
\label{fig:functoriality}
\end{figure*}

Checking the admissibility of the typing rule pertaining to $F_B$ is mostly
routine. We illustrate how this is to be done by considering
the least fixed point case in Figure~\ref{fig:Fmu};
the greatest fixed point case is shown in Figure~\ref{fig:Fnu} in the 
appendices.

\begin{figure*}[!t]
\centering
\[ \infer{\Gamma \vdash F^{+}_{\lambda p. \mu\ap (B\app p)\ap \t}(\x.\alpha.\pi) :
          \mu\ap (B\app P)\ap \t \supset \mu\ap (B\app P')\ap \t}{
   \infer{\Gamma, \beta : \mu\ap (B\app P)\ap \t \vdash \delta_\mu(\beta,\ldots) : \mu\ap (B\app P')\ap \t}{
     \infer{\Gamma, \beta : \mu\ap (B\app P)\ap \t \vdash \beta : \mu\ap (B\app P)\ap \t}{} &
     \infer{\Gamma, \beta : \mu\ap (B\app P)\ap \t, \gamma : B\app P\app (\mu\ap (B\app P'))\app \x
               \vdash \mu(B\app P',\x,\ldots) : \mu\ap (B\app P')\ap \x \vphantom{\t}}{
            \Gamma, \beta : \mu\ap (B\app P)\ap \t, \gamma : B\app P\app (\mu\ap (B\app P'))\app \x \vdash
            F_{\lambda p. B\app p\app (\mu\ap (B\app P'))\app \x}(\x.\alpha.\pi)\app\gamma :
            B\app P'\app (\mu\ap (B\app P'))\app \x}}} \]
\caption{Typing functoriality for least fixed-points}
\label{fig:Fmu}
\end{figure*}

\begin{figure*}[!t]
\[ \renewcommand{\arraystretch}{1.2} \begin{array}{rcl}
 \multicolumn{3}{c}{
  (\lambda \alpha.\pi) \app \pi' \;\ra\; \pi[\pi'/\alpha]
  \qquad
  \proj_i(\pair{\pi_1,\pi_2}) \;\ra\; \pi_i
  \qquad
  \elimv{\vin_i(\pi)}{\alpha.\pi_1}{\alpha.\pi_2} \;\ra\; \pi_i[\pi/\alpha]
 }
 \\
 \multicolumn{3}{c}{
  (\lambda x.\pi) \app t \;\ra\; \pi[t/x]
  \qquad
  \elimex{\pair{t,\pi}}{x.\alpha.\pi'} \;\ra\; \pi'\subst{t/x}\subst{\pi/\alpha} 
 }
 \\
  \delta_\mu(\mu(B,\t,\pi),\x.\alpha.\pi') &\ra&
     \pi'[\t/\x]\subst{\big(
       \F{\lambda p. B\ap p\ap \t}(
           \x.\beta.\delta_\mu(\beta,\x.\alpha.\pi'))\app \pi \big)/\alpha}
\\
  \delta_\nu(B,\t,\nu(\pi,\x.\alpha.\pi')) &\ra&
    \F{\lambda p. B\ap p\ap \t}(\x.\beta.\nu(\beta,\x.\alpha.\pi'))
    \app(\pi'\subst{\t/\x}\subst{\pi/\alpha}) \\
  \elimeq{\Gamma,\theta,\sigma,u,v,P,\refl{w}}{(\theta'_i.\pi_i)_i}
     &\ra& \pi_i\theta''\sigma \mbox{ where } \theta = \theta'_i\theta''
\end{array} \]
\caption{Reduction rules for \muNJ\ proof terms}
\label{fig:reduction}
\end{figure*}

The full collection of reduction rules is presented in
Figure~\ref{fig:reduction}. 
Note that the reduction rule for equality is not deterministic as
stated: determinism can be forced if needed by suitable assumptions
on \textcsus\ or by forcing a particular choice of $\theta'_i$ and
$\theta''$ in case of multiple possibilities. 

\begin{theorem}[Subject reduction]
If $\Gamma\vdashm\pi:P$ and $\pi\ra\pi'$ then $\Gamma\vdashm\pi':P$.
\end{theorem}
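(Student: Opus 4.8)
The plan is to argue by induction on the structure of $\pi$, with a case analysis on the reduction step $\pi \ra \pi'$. When the redex lies strictly inside $\pi$, the conclusion follows immediately from the induction hypothesis applied to the affected subterm, leaving the surrounding instance of a typing rule from Figure~\ref{fig:munj} intact; these congruence cases are routine. The real content is in the root redexes of Figure~\ref{fig:reduction}, where the two substitution lemmas, Lemma~\ref{lem:termsubstred} and Lemma~\ref{lem:proofsubstred}, carry most of the weight.

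For the standard logical redexes I would simply read off the premises of the relevant introduction/elimination pair. Thus $(\lambda\alpha.\pi)\app\pi' \ra \pi\subst{\pi'/\alpha}$ and $\elimv{\vin_i(\pi)}{\alpha.\pi_1}{\alpha.\pi_2} \ra \pi_i\subst{\pi/\alpha}$ follow from Lemma~\ref{lem:proofsubstred}; $(\lambda x.\pi)\app t \ra \pi\subst{t/x}$ follows from Lemma~\ref{lem:termsubstred}; and $\elimex{\pair{t,\pi}}{x.\alpha.\pi'} \ra \pi'\subst{t/x}\subst{\pi/\alpha}$ combines both lemmas. The projection redexes are immediate, since the required subderivation is literally one of the premises.

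The fixed-point redexes are where I expect the principal effort, as these are the cases that exercise functoriality. For the least fixed point,
\[ \delta_\mu(\mu(B,\t,\pi),\x.\alpha.\pi') \ra \pi'\subst{\t/\x}\subst{\big(\F{\lambda p. B\ap p\ap \t}(\x.\beta.\delta_\mu(\beta,\x.\alpha.\pi'))\app \pi\big)/\alpha}, \]
the premises give $\Gamma \vdash \pi : B\app(\mu\app B)\app\t$ and $\Gamma, \alpha : B\app S\app\x \vdash \pi' : S\app\x$, and the target type is $S\app\t$. The key observation is that, for fresh $\x$ and $\beta$, an application of the least fixed-point elimination rule (after weakening $\pi'$ with $\beta$) yields $\Gamma, \beta : \mu\app B\app\x \vdash \delta_\mu(\beta,\x.\alpha.\pi') : S\app\x$, which is exactly an instance of the functoriality premise (Figure~\ref{fig:functoriality}) with $P := \mu\app B$ and $P' := S$. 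The admissibility of the $F_B$ rule established in Figure~\ref{fig:Fmu} then gives
\[ \F{\lambda p. B\ap p\ap\t}(\x.\beta.\delta_\mu(\beta,\x.\alpha.\pi')) : B\app(\mu\app B)\app\t \supset B\app S\app\t, \]
and applying this to $\pi$ produces a proof of $B\app S\app\t$. Instantiating $\pi'$ with $\subst{\t/\x}$ via Lemma~\ref{lem:termsubstred} gives $\Gamma, \alpha : B\app S\app\t \vdash \pi'\subst{\t/\x} : S\app\t$, and substituting the proof of $B\app S\app\t$ for $\alpha$ via Lemma~\ref{lem:proofsubstred} closes the case. The greatest fixed-point redex is handled dually, using functoriality for $\nu$ (Figure~\ref{fig:Fnu}) in place of Figure~\ref{fig:Fmu}.

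The remaining case is the novel equality redex
\[ \elimeq{\Gamma,\theta,\sigma,u,v,P,\refl{w}}{(\theta'_i.\pi_i)_i} \ra \pi_i\theta''\sigma, \qquad \theta = \theta'_i\theta''. \]
Here the first premise types $\refl{w}$ at $u\theta = v\theta$, so the reflexivity rule forces $u\theta \equiv w \equiv v\theta$; in particular $\theta$ is a unifier of $u$ and $v$. Since $(\theta'_i)_i$ is a complete set of unifiers, $\theta$ factors as $\theta'_i\theta''$ for some index $i$ and substitution $\theta''$, which is precisely the side condition recorded by the reduction rule and the step that makes the closed-world selection of a branch sound. From the case premise $\Gamma\theta'_i \vdash \pi_i : P\theta'_i$ I would apply Lemma~\ref{lem:termsubstred} with $\theta''$ to obtain $\Gamma\theta \vdash \pi_i\theta'' : P\theta$, and then Lemma~\ref{lem:proofsubstred} with the cut witness $\sigma$, whose typing $\Gamma' \vdash \sigma : \Gamma\theta$ is the second premise of the elimination rule, to conclude $\Gamma' \vdash \pi_i\theta''\sigma : P\theta$. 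Since the elimination rule assigns the redex type $P\theta$ (up to $\equiv$), this is the desired conclusion, and the suspended substitutions $\theta$ and $\sigma$ are exactly what the reduct now forces through. In summary, the delicate part is keeping the nested term- and proof-substitutions aligned in the fixed-point cases, while the equality case becomes short once the factorization $\theta = \theta'_i\theta''$ has been extracted from completeness of the unifier set.
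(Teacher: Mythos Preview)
Your proposal is correct and follows the same approach as the paper: the paper's proof simply states that the result follows from the substitution lemmas (Lemmas~\ref{lem:termsubstred} and~\ref{lem:proofsubstred}) and spells out only the equality case, which you handle identically. Your additional detail on the fixed-point redexes, via the admissibility of the functoriality rule, is exactly the intended argument that the paper leaves implicit after setting up $F_B$.
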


\begin{proof}
This follows from the above substitution lemmas. For example, consider 
the equality case. If $u\theta \equiv v\theta$ then
$\elimeq{\Gamma',\theta,\sigma,u,v,P,\refl{u\theta}}{(\theta'_i.\pi'_i)_i}
\ra \pi'_i\theta''\sigma$ where $\theta = \theta'_i\theta''$.
We have a derivation of $\Gamma'\theta'_i \vdashm \pi'_i : P\theta'_i$.
Hence, by applying $\theta''$ and using
Lemma~\ref{lem:termsubstred}, $\Gamma'\theta \vdashm \pi'_i\theta'' : P\theta$
must have a derivation.
Finally, since $\Gamma \vdashm \sigma : \Gamma' \theta$ has a
derivation, by Lemma~\ref{lem:proofsubstred} there must be one for $\Gamma
\vdashm \pi'_i\theta''\sigma : P\theta$.
\end{proof}

\begin{proposition} \label{prop:redsubst}
For any proof terms $\pi$, $\pi'$ and $\rho$ and any term $t$,
$\pi \ra \pi'$ implies $\pi[\rho/\alpha] \ra \pi'[\rho/\alpha]$
and $\pi[t/x] \ra \pi'[t/x]$.
\end{proposition}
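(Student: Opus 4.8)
The plan is to prove Proposition~\ref{prop:redsubst} by induction on the derivation of the reduction $\pi \ra \pi'$. The statement is that reduction commutes with both proof-level and term-level substitution, so I would treat the two claims in parallel. For each reduction rule in Figure~\ref{fig:reduction}, as well as for the contextual (congruence) closure rules that allow reduction to occur inside subterms, I would verify that applying $[\rho/\alpha]$ or $[t/x]$ to both sides of the reduction relates the images by a single reduction step. The congruence cases are immediate from the induction hypothesis, since substitution is defined compositionally over the proof-term constructors (as laid out in Figure~\ref{fig:substitution}), so the real content lies in the base redex cases.

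For the purely functional redexes---$(\lambda\alpha.\pi)\app\pi' \ra \pi[\pi'/\alpha]$, $\proj_i(\pair{\pi_1,\pi_2}) \ra \pi_i$, the disjunction, $\forall$, and $\exists$ eliminations---the key tool is the standard substitution-commutation lemma: for distinct variables, $(\pi[\rho/\alpha])[\rho'/\beta] = (\pi[\rho'/\beta])[\rho[\rho'/\beta]/\alpha]$, with the analogous identity mixing term and proof substitutions. Using these identities together with the capture-avoidance convention that bound variables are renamed fresh (so that $\alpha$ and the substituted-for variable are disjoint), each base redex maps under substitution to another instance of the same redex schema, whose contractum is exactly the substituted contractum. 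These are routine but must be stated carefully so that the freshness conditions from Figure~\ref{fig:munj} are respected.

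The genuinely delicate cases are the two fixed-point reductions and the equality reduction, because their contracta invoke functoriality and carry suspended substitutions. For the $\delta_\mu$ rule, whose right-hand side is $\pi'[\t/\x]\subst{(\F{\lambda p. B\ap p\ap \t}(\x.\beta.\delta_\mu(\beta,\x.\alpha.\pi'))\app\pi)/\alpha}$, I would need an auxiliary commutation lemma stating that functoriality commutes with substitution, i.e.\ $\F{B}(\x.\alpha.\pi)\sigma = \F{B}(\x.\alpha.\pi\sigma)$ and $\F{B}(\x.\alpha.\pi)\theta = \F{B\theta}(\x.\alpha.\pi\theta)$, which follows by a separate induction on the depth of $p$ in $B\ap p$ using the clauses of Figure~\ref{fig:functoriality}. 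Granting that, the $\delta_\mu$ and $\delta_\nu$ cases reduce to the same substitution-commutation bookkeeping as the functional cases. For the equality reduction $\elimeq{\Gamma,\theta,\sigma,u,v,P,\refl{w}}{(\theta'_i.\pi_i)_i} \ra \pi_i\theta''\sigma$ with $\theta = \theta'_i\theta''$, the crucial point is that substitution is suspended over both $\theta$, $\sigma$ and the suspended $\pi$ while leaving the case branches $(\theta'_i.\pi_i)_i$ untouched (Figure~\ref{fig:substitution}). Hence applying $[\rho/\alpha]$ only modifies $\sigma$ to $\sigma[\rho/\alpha]$ (and $\pi$ to $\refl{w}[\rho/\alpha] = \refl{w}$), and the redex contracts to $\pi_i\theta''(\sigma[\rho/\alpha])$, which is exactly $(\pi_i\theta''\sigma)[\rho/\alpha]$ because the $\theta'_i, \theta''$ decomposition is unchanged; the term-substitution case is analogous with $\theta$ replaced by $\theta\,[t/x]$.

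The main obstacle I expect is precisely the interaction between the suspended substitutions in the equality construct and the external substitution, together with the functoriality-commutation lemma: one must check that the witness decomposition $\theta = \theta'_i\theta''$ survives substitution without forcing a change in the chosen $\theta'_i$, which works here only because substitution is defined to leave the case-analysis branches and their indexing \textcsu\ alone. This is exactly the stability-under-reduction property that the authors emphasize their reformulated elimination rule was designed to secure, so I would make sure the proof highlights that the suspension design is what makes this proposition go through cleanly.
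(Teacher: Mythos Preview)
Your proposal is correct and follows the standard approach that the paper has in mind; the paper's own proof is simply the one-line ``Both implications are easily checked,'' and your elaboration---induction on the reduction derivation, substitution-commutation identities for the base redexes, and the auxiliary functoriality-commutation lemma for the fixed-point cases---is exactly how one would unpack that claim. Your observation that the suspension of substitutions over the $(\theta'_i.\pi_i)_i$ branches in $\delta_{=}$ is precisely what makes the equality case go through (the decomposition $\theta=\theta'_i\theta''$ becomes $\theta[t/x]=\theta'_i(\theta''[t/x])$ with the same $\theta'_i$) is the key point the authors designed for, and you identify it correctly.
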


\begin{proof}
Both implications are easily checked.
\end{proof}

A proof term is {\em normal} if it contains no redexes and it is
strongly normalizable if every reduction sequence starting from it
terminates in a normal proof term. The set of strongly normalizable
proof terms is denoted by $\SN$. The normalizability of proof terms
can be coupled with the following observation to show the
(conditional) consistency of the logic. 

\begin{lemma} \label{lem:bot}
If ${\equiv}$ is defined by a confluent rewrite system 
that rewrites terms to terms and atomic propositions to propositions,
then $\vdashm \pi : \bot$ is not derivable for any normal $\pi$. 
\end{lemma}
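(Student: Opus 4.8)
The plan is to establish the standard structural fact that \emph{every normal derivation in the empty context ends with an introduction rule}, and then to observe that $\bot$ admits no introduction rule and is congruent to no introducible formula. I would prove the structural fact by tracing the \emph{main branch} of a normal derivation, \ie{} the sequence of rules obtained by repeatedly passing to the major premise of each elimination, and by showing that in the empty context this branch must be capped by an introduction rule which cannot, in fact, be followed below by any elimination. Since $\vdashm \pi : \bot$ has empty context on the left, the lemma then follows at once.

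The first ingredient is a property of $\equiv$ furnished by the hypothesis: rewriting cannot alter the outermost logical structure of a non-atomic formula. Since every rewrite rule has a term or an atomic proposition as its left-hand side, no rule can fire at the top connective of a formula headed by one of $\top,\bot,\supset,\wedge,\vee,\forall,\exists,{=},\mu,\nu$, so $\red$ (hence $\red^{*}$) preserves that head. Using confluence I would lift this to the statement that two \emph{non-atomic} formulas that are congruent have the same head: a common reduct of $P$ and $Q$ inherits the head of each, so the two heads agree. Confluence is exactly what is needed here, since it is what forbids an atomic formula from reducing along two different routes to formulas with distinct heads and thereby identifying them.

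Next I would inspect the major premises of the elimination rules in Figure~\ref{fig:munj} --- for application, projection, case analysis, $\forall$- and $\exists$-elimination, $\elimbot{\pi}$, the equality elimination (whose major premise is the eliminated equation $\pi : u\theta = v\theta$), and $\delta_\mu$, $\delta_\nu$. In each of them the major premise is typed in the \emph{same} context as the conclusion; new hypotheses appear only in the minor premises. Hence the entire main branch of a derivation in the empty context stays in the empty context, and so it cannot terminate at the variable rule, whose side condition $(\alpha:Q)\in\Gamma$ is then unsatisfiable. As derivations are finite, the main branch must terminate at an introduction rule $I$.

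It remains to examine the junction between this topmost $I$ and the elimination $E$ directly below it, if one exists. The formula $P$ sitting there is $\equiv$ to both the form introduced by $I$ and the form required as major premise by $E$, and both of these forms are non-atomic; by the congruence property their heads coincide, forcing $E$ to be the elimination dual to $I$. But every dual introduction/elimination pair is a redex --- \eg{} $(\lambda\alpha.\pi)\app\pi'$, $\proj_i\pair{\pi_1,\pi_2}$, $\delta_\mu(\mu(B,\t,\rho),\ldots)$, $\delta_\nu(\ldots,\nu(\rho,\ldots))$, and $\elimeq{\cdots,\refl{w}}{\cdots}$ all reduce --- contradicting normality. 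Thus no elimination lies below $I$; the derivation is the lone introduction $I$, whose conclusion carries $I$'s non-atomic head, and since no introduction produces $\bot$ there is no normal $\pi$ with $\vdashm \pi : \bot$. I expect the congruence property to be the crux, as it is the only place where the confinement of rewriting to terms and atomic propositions, together with confluence, is used; by contrast, checking that the new equality and fixed-point rules fit the main-branch pattern is direct. The one case requiring care is equality elimination: when its major premise is $\refl{w}$, the typing constraint forces $u\theta \equiv v\theta$, so $\theta$ is a unifier of $u$ and $v$ and factors as $\theta'_i\theta''$ through the complete set, which is precisely what allows the redex $\elimeq{\cdots,\refl{w}}{\cdots}\red \pi_i\theta''\sigma$ to fire.
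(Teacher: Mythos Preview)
Your proposal is correct and follows essentially the same approach as the paper: both arguments trace the chain of major premises of eliminations in the empty context, rule out the variable case, observe that an introduction sitting as the main argument of an elimination yields a redex (with the equality case handled via the completeness of the \textcsu), and conclude that the root must be an introduction, which cannot produce $\bot$ since the congruence cannot identify $\bot$ with another connective. Your write-up is more explicit than the paper's about the role of confluence in establishing that congruent non-atomic formulas share a head, but the structure of the argument is the same.
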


\begin{proof}
This standard observation is not affected by the rewriting layer,
since $\bot$ cannot be equated with another logical connective under the
assumptions on $\equiv$, and it is not affected either by our new constructs,
for which progress is ensured: eliminations followed by introductions can
always be reduced. More details may be found in Appendix~I.
\end{proof}

\section{Strong Normalizability} \label{sec:norm}

In a fashion similar to \cite{dowek03jsl}, we now establish strong
normalizability for proof reductions when the congruence
relation satisfies certain general conditions.
The proof is based on the framework of reducibility
candidates, and borrows elements from earlier work in linear logic
\cite{baelde12tocl} regarding fixed-points.

\begin{definition}
A proof term is \emph{neutral} iff it is not an introduction,
\ie\ it is a variable or an elimination construct.
\end{definition}

\begin{definition}
A set $R$ of proof terms is a \emph{reducibility candidate} if
(1) $R\subseteq\SN$;
(2) 
  $\pi\in R$ and $\pi\ra\pi'$ implies $\pi'\in R$;
and (3) if $\pi$ is neutral and all of its one-step
  reducts are in $R$, then $\pi\in R$.
We denote by $\C$ the set of all reducibility candidates.
\end{definition}

Conditions (2,3) are positive and compatible with (1)
so that for any subset $S$ of $\SN$ there is a least candidate
containing $S$. 
We refer to the operation that yields this set as \emph{saturation}.
Reducibility candidates, equipped with inclusion, form a complete lattice:
the intersection of a family of candidates gives their infimum
and the saturated union gives their supremum.
Having a complete lattice, we can define least and
greatest fixed points of monotonic operators.
The ordering and the observations about it lift pointwise for
functions from terms to candidates, which we call \emph{predicate
  candidates}. We use $\X$ and $\Y$ ambiguously to denote candidates
and predicate candidates. 

\begin{definition}
A \emph{pre-model} $\M$ is an assignment of a function $\hat{a}$ from 
$|\gamma_1|\times\ldots\times|\gamma_n|$ to $\C$ to each predicate
constant $a$ of type $\gamma_1\ra\ldots\gamma_n\ra o$. Here, 
$|\gamma|$ denotes the set of (potentially open) terms of type $\gamma$.
\end{definition}


\begin{definition} \label{def:interp}
Let $\M$ be a pre-model, let $P$ be a formula and let $\E$ be a
context assigning predicate candidates of the right types to at
least the free predicate variables in $P$. 
We define the candidate $\interp{P}$,
called the \emph{interpretation of} $P$, by recursion on the structure
of $P$ as shown in Figure~\ref{fig:interpretation}. 
\end{definition}

\begin{figure*}[!t]
\[\begin{array}{rcl}
\multicolumn{3}{c}{
  \interp{\bot} = \interp{\top} = \interp{u=v} = \SN
\qquad
  \interp{p \ t_1 \ldots t_n} = \E(p)(t_1,\ldots,t_n)
\qquad
  \interp{a \ t_1 \ldots t_n} = \hat{a}(t_1,\ldots,t_n)
}
\\
  \interp{P \supset Q} &=& \set{\pi\in\SN}{
     \pi \ra^* \lambda\alpha.\pi_1
     \mbox{ implies }
     \pi_1[\pi'/\alpha]\in\interp{Q}
     \mbox{ for any }
     \pi'\in\interp{P}
   }
\\
  \interp{P \wedge Q} &=& \set{\pi\in\SN}{
     \pi \ra^* \pair{\pi_1,\pi_2} \mbox{ implies }
     \pi_1\in\interp{P}
     \mbox{ and }
     \pi_2\in\interp{Q}
   }
\\
  \interp{P_1 \vee P_2} &=& \set{\pi\in\SN}{
     \pi \ra^* \vin_i(\pi') \mbox{ implies }
     \pi'\in\interp{P_i}
   }
\\
  \interp{\forall x.~ P} &=& \set{\pi\in\SN}{
    \pi \ra^* \lambda x.\pi'
    \mbox{ implies }
    \pi'[t/x]\in\interp{P[t/x]}
    \mbox{ for any }
    t}
\\
  \interp{\exists x.~ P} &=& \set{\pi\in\SN}{
    \pi \ra^* \pair{t,\pi'}
    \mbox{ implies }
    \pi'[t/x]\in\interp{P[t/x]}
  }
\\
  \interp{\mu B \t} &=& \lfp{\phi}(\t) \mbox{ where } 
  \phi(\X) = \t'\mapsto \set{\pi\in\SN}{
       \pi \ra^* \mu (B,\t', \pi') \text{ implies }
       \pi' \in \interp[\E+\pair{p,\X}]{B p \t'}
    }
\\
  \interp{\nu B \t} &=& \gfp{\phi}(\t) \mbox{ where } 
  \phi(\X) = \t'\mapsto \set{\pi}{
     \delta_\nu(B,\vec{t'},\pi)\in\interp[\E+\pair{p,\X}]{B p \t'}
  }
\ignore{
\\
  \interp{\nu B \t} &=& \gfp{\phi_{\nu B}} \\
  \phi_{\nu B}(\X) &=&
    \{\; \pi\in\SN \; : \;
       \pi \ra^* \nu(\pi',\x.\alpha.\pi'') \text { implies } \\
  & &
    \quad\quad
       (\lambda p.~ B p \t')(
          \x.\alpha.\nu(\alpha, \x.\alpha.\pi''))
        (\pi''\subst{\t'/\x}\subst{\pi'/\alpha})
       \in \interp[\E+\pair{p,\X}]{B p \t'}
    \; \}
}
\end{array}\]
\caption{Interpretation of formulas as candidates}
\label{fig:interpretation}
\end{figure*}

To justify this definition, we show simultaneously by an induction on
$P$ that $\interp{P}$ is a
candidate and that it is monotonic (resp. anti-monotonic) 
in $\E(p)$ for any variable $p$ that only occurs positively
(resp. negatively) in~$P$; the latter two facts ensure
that the fixed points assumed in the definition actually exist,
anti-monotonicity being needed because of the covariance in
implication formulas.
Preservation of (anti)monotonicity and satisfaction of the conditions
for reducibility candidates are readily verified in all but the fixed point cases.
For the least fixed point case, $\interp{\mu\ap B\ap \t}$ is easily
seen to be a candidate provided it is well-defined, \ie\ if
$\lfp{\phi}$ exists for $\phi$ as in the definition. 
But this must be so: the induction hypothesis applied to $B\ap p\ap
\t'$ ensures that $\phi$ is a monotonic mapping,
hence it has a least fixed point in the lattice of
predicate candidates. 
For monotonicity, consider $\E$ and $\E'$ differing only on a variable
$p$ 
that occurs only positively in $\mu \ap B \ap \t$,
with $\E(p)\subseteq \E'(p)$. Let $\interp[\E']{\mu\ap B\ap \t} =
\lfp{\phi'} \t$. Unfolding and using the induction hypothesis, we have
$\phi(\X) \subseteq \phi'(\X)$ for any candidate $\X$,
and in particular $\phi(\interp[\E']{\mu\ap B\ap \t}) \subseteq
\phi'(\interp[\E']{\mu\ap B\ap \t}) = \interp[\E']{\mu\ap B\ap \t}$.
The least fixed point being contained in all prefixed points,
we obtain the expected result:
$\interp[\E]{\mu\ap B\ap \t} = \lfp{\phi} \subseteq \interp[\E']{\mu\ap B\ap \t}$.
Antimonotonicity is established in a symmetric fashion. The treatment
of the greatest fixed point case is similar. 


\begin{notation}
If $P$ is a predicate of type $\vec{\gamma}\ra o$,
$\interp{P}$ denotes the mapping $\vec{t} \mapsto \interp{P\ap \t}$.
If $B$ is of type $(\vec{\gamma}\ra o)\ra o$, $\interp{B}$
denotes the mapping $\X \mapsto \interp[\E + \pair{p,\X}]{B\ap p}$ and
if $B$ is a predicate operator of type $(\vec{\gamma}\ra o)\ra
\vec{\gamma} \ra o$, $\interp{B}$ denotes the mapping $\X \mapsto \t \mapsto
\interp[\E + \pair{p,\X}]{B\ap p\ap \t}$.
For conciseness we write directly
$\interp{B\ap \X\ap \t}$ for 
$\interp{\lambda p.~ B\ap p\ap \t}\ap \X$ or, equivalently,
$\interp{B}\ap \X\ap \t$.
\end{notation}

\begin{lemma} \label{lem:subst}
Interpretation commutes with second-order substitution:
$\interp{B\subst{P/p}} = \interp[\E+\pair{p, \interp{P}}]{B}$.
\end{lemma}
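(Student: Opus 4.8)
The plan is to prove Lemma~\ref{lem:subst} by induction on the structure of the predicate operator $B$, following the clauses defining the interpretation in Figure~\ref{fig:interpretation}. The statement asserts that substituting a predicate $P$ for the variable $p$ at the syntactic level and then interpreting coincides with interpreting $B$ in an environment that has been extended to map $p$ to the semantic object $\interp{P}$. Both sides are predicate candidates (functions from terms to candidates), so I would establish the equality pointwise at each argument tuple $\t$, and the base cases should pin down how the two operations interact.

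First I would dispatch the atomic and logical-connective cases. For the clause $B = \lambda q.~ q\app \t$ where $q = p$ is the bound predicate variable, the left-hand side $\interp{(q\app\t)\subst{P/p}} = \interp{P\app\t}$ equals $\interp{P}(\t)$ by Notation, which is exactly what the extended environment $\E + \pair{p,\interp{P}}$ assigns via the clause $\interp[\E+\pair{p,\interp{P}}]{p\app\t} = (\E+\pair{p,\interp{P}})(p)(\t) = \interp{P}(\t)$. For any other predicate variable $q \neq p$ the substitution is inert and both environments agree on $q$, and similarly for predicate constants $a$, for $\top$, $\bot$, and for equality, where the interpretation is the constant $\SN$ regardless of the environment. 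The connective cases $\supset$, $\wedge$, $\vee$, $\forall$, $\exists$ then follow because each clause defines $\interp{P\ Q}$ purely in terms of the interpretations of its immediate subformulas; since substitution commutes with the formula constructors and the induction hypothesis applies to each subformula, I can push $\subst{P/p}$ inward and rewrite each subinterpretation, matching the two sides definitionally. One must check that the bound variables introduced by the substitution do not clash with $p$, but this is handled by the usual capture-avoidance conventions and the stipulation that predicate variables occur only under the binding abstraction.

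The genuine content is in the fixed-point cases $B = \lambda q.~ \mu\app(B'\app q)\app\t$ (and symmetrically for $\nu$). Here $\interp{(\mu\app(B'\app q)\app\t)\subst{P/p}}$ is defined as $\lfp{\phi_1}(\t)$ where $\phi_1$ uses the interpretation of $(B'\app q)\subst{P/p}$ with an auxiliary fresh candidate bound to the local fixed-point variable, while the right-hand side is $\lfp{\phi_2}(\t)$ with $\phi_2$ built from $\interp[\E+\pair{p,\interp{P}}]{B'\app q}$. The key step is to show $\phi_1 = \phi_2$ as operators on predicate candidates, which reduces to an inner application of the induction hypothesis to the subformula $B'\app q$ under an environment already extended at the local fixed-point variable; I must verify that extending the environment at $p$ and extending it at the (distinct, fresh) fixed-point variable commute, so the two extensions can be performed in either order. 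Once $\phi_1 = \phi_2$, their least fixed points coincide by uniqueness of $\lfp$, giving the result. The main obstacle I expect is exactly this bookkeeping of environment extensions in the fixed-point clauses: ensuring the fresh bound predicate variable is chosen disjoint from $p$ and from the free variables of $P$ so that the two extensions are independent, and confirming that the monotonicity established when justifying Definition~\ref{def:interp} legitimizes invoking the induction hypothesis at the extended environment. Everything else is a routine propagation of substitution through the definitional clauses.
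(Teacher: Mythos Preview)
Your proposal is correct and follows exactly the approach the paper takes: the paper's proof is simply ``Straightforward, by induction on $B$,'' and you have spelled out that induction in detail, including the only nontrivial point (the commuting environment extensions in the fixed-point clauses). There is nothing to add or compare.
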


\begin{proof}
Straightforward, by induction on $B$.
\end{proof}

We naturally extend the interpretation to typing contexts:
if $\Gamma = (\alpha_1:P_1, \ldots, \alpha_n:P_n)$,
$\interp{\Gamma} =
(\alpha_1:\interp{P_1}, \ldots, \alpha_n:\interp{P_n})$.
We also write $\sigma \in \interp{\Gamma}$ when
$\sigma$ is of the form
$[\pi_1/\alpha_1,\ldots,\pi_n/\alpha_n]$
with $\pi_i\in\interp{P_i}$ for all $i$.

\begin{definition}
If $\pi$ is a proof term with free variables $\alpha_1, \ldots, \alpha_n$
and $\Y, \X_1, \ldots, \X_n$ are reducibility candidates,
we say that $\pi$ is
\emph{$(\alpha_1:\X_1, \ldots, \alpha_n:\X_n \vdash \Y)$-reducible}
if $\pi\subst{\pi'_i/\alpha_i}_i\in\Y$ for any $(\pi'_i)_i \in (\X_i)_i$.
When it is not ambiguous, we may omit the variables and simply
say that $\pi$ is $(\X_1,\ldots,\X_n\vdash \Y)$-reducible.
\end{definition}

\begin{definition}
A pre-model $\M$ is a \emph{pre-model of $\equiv$}
iff it accords the same interpretation to formulas that are congruent.
\end{definition}

In the rest of this section, we assume that $\M$ is a pre-model of the
congruence, and we show that if $\Gamma \vdash \pi : P$ has a proof
then $\pi$ is $(\interp{\Gamma}\vdash\interp{P})$-reducible.
In order to do so, we prove \emph{adequacy} lemmas which show that
each typing rule can be simulated in the interpretation.

\begin{lemma} \label{lem:adequacies}
The following properties hold for any context $\E$.
\begin{itemize}
\item[(${\supset}$)]
  \begin{itemize}
  \item If $\pi$ is $(\alpha:\interp{P} \vdash \interp{Q})$-reducible, \\ then
    $\lambda \alpha.\pi \in \interp{P \supset Q}$.
  \item If $\pi\in\interp{P\supset Q}$ and $\pi'\in\interp{P}$, then
    $\pi\app\pi'\in\interp{Q}$.
  \end{itemize}
\item[(${\wedge}$)]
  \begin{itemize}
  \item If $\pi_1\in\interp{P_1}$ and $\pi_2\in\interp{P_2}$, then
    $\pair{\pi_1,\pi_2}\in\interp{P_1\wedge P_2}$.
  \item If $\pi\in\interp{P_1\wedge P_2}$, \\ then
    $\fst{\pi}\in\interp{P_1}$ and $\snd{\pi}\in\interp{P_2}$.
  \end{itemize}
\item[(${\vee}$)]
  \begin{itemize}
  \item If $\pi\in\interp{P_i}$ for $i\in\{1,2\}$, then
    $\vin_i(\pi)\in\interp{P_1\vee P_2}$.
  \item If $\pi\in\interp{P_1\vee P_2}$ and each $\pi_i$
    is $(\alpha:\interp{P_i}\vdash \interp{Q})$-reducible,
    then $\elimv{\pi}{\alpha.\pi_1}{\alpha.\pi_2} \in \interp{Q}$.
  \end{itemize}
\item[($\top$)]
  \begin{itemize}
  \item The proof $\unit$ belongs to $\interp{\top}$.
  \end{itemize}
\item[($\bot$)]
  \begin{itemize}
  \item If $\pi\in\interp{\bot}$, then $\elimbot{\pi}\in\interp{P}$
     for any $P$.
  \end{itemize}
\item[(${\forall}$)]
  \begin{itemize}
  \item If $\pi[t/x] \in \interp{P[t/x]}$ for any $t$, then
    $\lambda x.\pi \in \interp{\forall x.~ P}$.
  \item If $\pi\in\interp{\forall x.~ P}$, then $\pi\app t \in \interp{P[t/x]}$.
  \end{itemize}
\item[(${\exists}$)]
  \begin{itemize}
  \item If $\pi \in \interp{P[t/x]}$,
    then $\pair{t,\pi}\in\interp{\exists x.~ P}$.
  \item If $\pi \in \interp{\exists x.~ P}$ and $\pi'\subst{t/x}$ is 
    $(\alpha:\interp{P\subst{t/x}}\vdash \interp{Q})$-reducible for any $t$, then
    $\elimex{\pi}{x.\alpha.\pi'} \in \interp{Q}$.
  \end{itemize}
\item[(${=}$)]
  \begin{itemize}
  \item 
        $\refl{t} \in \interp{t=t}$.
  \item If $\pi \in \interp{t\theta = t'\theta}$, $\sigma \in 
    \interp{\Gamma\theta}$ and
    $\pi'_i\theta'$ is
    $(\interp{\Gamma\theta_i\theta'}\vdash\interp{P\theta_i\theta'})$-reducible
    for any $i$ and $\theta'$,
    then $\elimeq{\Gamma,\theta,\sigma,t,t',P,\pi}{(\theta_i.\pi_i)_i} \in \interp{P\theta}$.
  \end{itemize}
\item[($\mu$)]
  \begin{itemize}
  \item If $\pi\in\interp{B(\mu B)\t}$,
     then $\mu(B,\t,\pi)\in\interp{\mu B \t}$.
  \end{itemize}
\item[($\nu$)]
  \begin{itemize}
  \item If $\pi\in\interp{\nu B \t}$,
    then $\delta_\nu(B,\t,\pi)\in \interp{B (\nu B) \t}$.
  \end{itemize}
\end{itemize}
\end{lemma}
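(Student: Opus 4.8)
The plan is to establish each of the adequacy properties in Lemma~\ref{lem:adequacies} by working directly from the definition of the interpretation given in Figure~\ref{fig:interpretation}, exploiting the three reducibility-candidate conditions (closure inside $\SN$, closure under reduction, and the neutral-term closure property (3)). Most cases fall into a common pattern. For the \emph{introduction} directions---the first bullets under $(\supset)$, $(\wedge)$, $(\vee)$, $(\forall)$, $(\exists)$, $(\mu)$, and the base cases $(\top)$, $(=)$---I would observe that the interpretation of a compound formula is defined precisely as the set of strongly normalizable terms whose reducts of introduction shape have their immediate subterms in the appropriate interpretations. So, given that the subterms already lie in the relevant candidates, I first note they are in $\SN$, hence the constructed term's subterms all normalize; then I analyze the possible reducts of the introduction-form term, checking that any reduct that reaches the head-normal introduction shape satisfies the membership condition. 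Since an introduction term only reduces by reducing its subterms (there is no head redex), every reduct again has the same outer constructor with reduced subterms, and closure of candidates under reduction (condition (2)) keeps those subterms in place.

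For the \emph{elimination} directions---the second bullets under $(\supset)$, $(\wedge)$, $(\vee)$, $(\forall)$, $(\exists)$, $(\bot)$, $(\nu)$, and the second $(=)$ bullet---the proofs go through condition (3). The elimination constructs are neutral, so to place such a term in the target candidate it suffices to show all its one-step reducts are already there, and then invoke (3). I would set this up as an induction on the sum of the reduction lengths of the immediate subterms, which are finite because those subterms lie in candidates and hence in $\SN$. The reducts split into two kinds: those that reduce a proper subterm, handled by the induction hypothesis together with closure under reduction of the subterms' candidates; and the single \emph{key redex} reduct that fires when the principal subterm has reached an introduction form, which is exactly the case where the reducibility hypothesis on the branch or body supplies membership in the target. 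For instance, in the $(\supset)$ elimination, once $\pi \reds \lambda\alpha.\pi_1$ the defining clause for $\interp{P\supset Q}$ directly yields $\pi_1[\pi'/\alpha]\in\interp{Q}$; the neutral application $\pi\app\pi'$ then has all reducts in $\interp{Q}$ and (3) closes the case.

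The genuinely delicate cases are equality elimination and the greatest fixed point, together with the implicit appeals to the premodel-of-$\equiv$ hypothesis. For the $(=)$ elimination I must handle the suspended term and proof substitutions built into $\elimeq{\Gamma,\theta,\sigma,t,t',P,\pi}{(\theta_i.\pi_i)_i}$: the reduction only fires when $\pi$ reaches $\refl{w}$, producing $\pi_i\theta''\sigma$ with $\theta=\theta_i\theta''$, so I need the reducibility hypothesis on $\pi'_i\theta'$ at the specific instantiation $\theta'=\theta''$, combined with $\sigma\in\interp{\Gamma\theta}$, to land the contractum in $\interp{P\theta}$; all other reducts are subterm reductions dispatched by condition (3). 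For $(\nu)$, the subtlety is that $\interp{\nu B\t}$ is defined via the \emph{greatest} fixed point as $\set{\pi}{\delta_\nu(B,\t,\pi)\in\interp[\E+\pair{p,\X}]{Bp\t}}$, so membership of $\pi$ in $\interp{\nu B\t}$ is essentially by definition the statement that $\delta_\nu(B,\t,\pi)$ sits in the unfolded interpretation; here I expect to lean on the fixed-point identity $\gfp{\phi}=\phi(\gfp{\phi})$ and on Lemma~\ref{lem:subst} to rewrite $\interp[\E+\pair{p,\interp{\nu B}}]{Bp\t}$ as $\interp{B(\nu B)\t}$. The main obstacle will be this greatest-fixed-point case: reconciling the coinductive definition of the candidate with the target $\interp{B(\nu B)\t}$ requires carefully unwinding the fixed point and invoking the second-order substitution lemma, and it is where the semantic fixed-point machinery of Figure~\ref{fig:interpretation} does the real work rather than the uniform neutral-term argument used elsewhere.
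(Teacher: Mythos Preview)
Your proposal is correct and follows essentially the same approach as the paper: introduction cases by checking the defining clause of the interpretation against arbitrary reducts (using closure under reduction), elimination cases by induction on the strong normalizability of the subterms and condition~(3) for neutral terms, with the $\mu$- and $\nu$-cases handled by unfolding the fixed point via $\lfp{\phi}=\phi(\lfp{\phi})$ (resp.\ $\gfp{\phi}=\phi(\gfp{\phi})$) together with Lemma~\ref{lem:subst}. One small miscalibration: you flag the $(\nu)$ case as the main obstacle, but in the paper it is in fact the \emph{easiest} elimination---once the greatest fixed point is unfolded and Lemma~\ref{lem:subst} applied, the claim is literally the definition of $\interp{\nu B\t}$, with no induction on $\SN$ needed; the genuinely subtle case here is only the equality elimination, where you correctly identify the use of $\sigma\in\interp{\Gamma\theta}=\interp{\Gamma\theta_i\theta''}$ (this equality of interpretations relying on the pre-model-of-$\equiv$ assumption).
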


\begin{proof}
These observations are proved easily using standard proof techniques on 
candidates. We illustrate only a few cases here;
more details may be found in Appendix~I.
For the case of least fixed point introductions, we have
$\interp{\mu B\t} = \lfp{\phi}(\t) = \phi(\interp{\mu B})(\t)$
by Definition~\ref{def:interp}, and thus
$\interp{\mu B \t} =
 \set{\rho\in\SN}{\rho \ra^* \mu(B,\t,\pi') \mbox{ implies } \pi'\in
 \interp{B(\mu B)\t}}$ by Lemma~\ref{lem:subst},
from which it is easy to conclude.
Similarly, we observe that
$\interp{\nu B \t} = \set{\pi}{\delta_\nu(B,\t,\pi)\in\interp{B(\nu B)\t}}$
from which the greatest fixed point elimination case follows immediately.
Finally, the equality elimination case is proved by induction on
the strong normalizability of the subderivations $\pi$, $\sigma$ and $\pi_i$.
In order to show that a neutral term belongs to a candidate, it suffices to
consider all its one-step reducts. Reductions occurring inside subterms are 
handled by induction hypothesis. We may also have a toplevel redex when
$t\theta \equiv t'\theta$ and $\pi = \refl{t\theta}$, reducing to
$\pi_i\theta'\sigma$ where $\theta'$ is such that
$\theta_i\theta' \equiv \theta$.
By hypothesis, $\pi_i\theta'$ is
$(\interp{\Gamma\theta_i\theta'} \vdash \interp{P\theta_i\theta'})$-reducible
and $\sigma \in \interp{\Gamma\theta} = \interp{\Gamma\theta_i\theta'}$,
and thus we have $\pi_i\theta'\sigma\in\interp{P\theta}$ as expected.
\end{proof}

Although adequacy is easily proved for our new equality formulation,
a few important observations should be made here.
First, the proof crucially relies on the fact that we are considering only 
syntactic pre-models, and not the general notion of pre-model of Dowek and 
Werner where terms may be interpreted in arbitrary structures.
This requirement makes sense conceptually, since closed-world
equality internalizes the fact that equality can only hold when the
congruence allows it, and is thus incompatible with further equalities
that could hold in non-trivial semantic interpretations.
Second, the suspension of proof-level substitutions in equality elimination
goes hand in hand with the independence of interpretations for
different predicate instances, which in turn is necessary to interpret
recursive definitions.
Indeed, when applying a proof-level substitution $\sigma\in\interp{\Gamma}$
on an eager equality elimination, we are forced to apply the \textcsu\ 
substitutions on $\sigma$, and we need $\sigma \in \interp{\Gamma\theta_i}$
which essentially forces us to have a term-independent
interpretation~\cite{baelde12tocl}.


\renewcommand{\interp}[2][]{|#2|^{#1}}

We now address the adequacy of functoriality, induction and coinduction.

\begin{lemma} \label{lem:adeqfun}
Let $\pi$ be a proof, and let $\X$ and $\X'$ be predicate candidates
such that
$\pi\subst{\t/\x}$ is $(\alpha:\X\t \vdash \X'\t)$-reducible
for any $\t$.
If $B$ is a monotonic (resp.~antimonotonic) operator,
then $\F{B}^+(\x.\alpha.\pi) \in \interp{B\X \supset B\X'}$
(resp.
$\F{B}^-(\x.\alpha.\pi) \in \interp{B\X'\supset B\X}$).
\end{lemma}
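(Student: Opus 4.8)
The plan is to proceed by induction on the maximum depth of an occurrence of $p$ in $B\app p$, following exactly the clause-by-clause definition of $\F{B}^*$ in Figure~\ref{fig:functoriality}, and in each case to use the adequacy lemmas (Lemma~\ref{lem:adequacies}) to turn the syntactic construction of $\F{B}^*$ into the required membership in the interpretation. Since $\F{B}^+$ and $\F{B}^-$ are defined by mutual recursion (the implication clause invokes the complementary polarity on the antecedent), I would prove the positive and negative statements simultaneously by a single induction. The base cases are immediate: for $B = \lambda p.\ p\app\t$ we have $\F{B}^+(\x.\alpha.\pi) = \lambda\alpha.\pi\subst{\t/\x}$, and since $\pi\subst{\t/\x}$ is $(\alpha:\X\t \vdash \X'\t)$-reducible by hypothesis, the $(\supset)$-introduction clause of Lemma~\ref{lem:adequacies} gives $\lambda\alpha.\pi\subst{\t/\x} \in \interp{B\X \supset B\X'}$; for the constant case $B = \lambda p.\ Q$ with $p \notin Q$, the identity $\lambda\beta.\beta$ clearly lies in $\interp{Q \supset Q}$.

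For the inductive step I would treat each connective by assuming we are given $\beta \in \interp{B\X}$ (for the positive case) and showing the body of $\F{B}^*(\x.\alpha.\pi)$ applied to $\beta$ lands in $\interp{B\X'}$. Each case is a mechanical combination of an elimination adequacy on $\beta$ followed by an introduction adequacy, with the two sub-operators $B_1, B_2$ handled by the induction hypothesis. For instance, in the $\wedge$ case I use the $(\wedge)$-elimination clause to get $\fst{\beta} \in \interp{B_1\X}$ and $\snd{\beta} \in \interp{B_2\X}$, apply the induction hypothesis to each $\F{B_i}^*$, and reassemble with the $(\wedge)$-introduction clause. The $\vee$, $\forall$, and $\exists$ cases are analogous, using $\elimv{}{}{}$, application to $t$, and $\elimex{}{}{}$ respectively. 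The crucial case is the implication $B = \lambda p.\ (B_1\app p) \supset (B_2\app p)$: here the \emph{polarity flips} on the antecedent, so to push a proof $\gamma \in \interp{B_1\X'}$ backward I must invoke the induction hypothesis for $\F{B_1}^{-{*}}$, which is precisely why the two polarities must be proved together, and why the monotonicity restriction on operators guarantees the recursive calls remain well-typed with the correct variances.

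The genuinely interesting cases, and the main obstacle, are the fixed-point clauses $B = \lambda p.\ \mu\app(B'\app p)\app\t$ and its $\nu$ counterpart, where $\F{B}^*$ itself introduces a $\delta_\mu$ (respectively $\nu$) construct. For the $\mu$ case I would start from $\beta \in \interp{\mu\app(B'\X)\app\t}$ and must show $\delta_\mu(\beta, \x.\gamma.\ \mu(\ldots))$ lies in $\interp{\mu\app(B'\X')\app\t}$. Establishing this requires the adequacy of induction itself, so the honest dependency is that Lemma~\ref{lem:adeqfun} and the induction adequacy must be proved by a single combined induction (on formula structure together with the depth measure), rather than sequentially. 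Concretely, I expect to unfold $\interp{\mu\app(B'\X')\app\t}$ via its least-fixed-point characterization, verify that the inner body $\mu(B'\X', \x, \F{\lambda p.\ B'\app p\app(\mu\app(B'\X'))\app\x}^*(\x.\alpha.\pi)\app\gamma)$ is reducible by the induction hypothesis applied at strictly smaller $p$-depth, and then close the case by the induction-adequacy principle that $\delta_\mu$ of a reducible stepping function preserves membership. The $\nu$ case is dual, relying on the greatest-fixed-point characterization and the $\delta_\nu$ elimination adequacy already recorded in Lemma~\ref{lem:adequacies}. I would relegate the detailed verification of these two fixed-point clauses, and the precise formulation of the shared induction measure, to the appendix, since they carry the only real content beyond routine bookkeeping.
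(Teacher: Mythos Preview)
Your overall architecture is right: the three lemmas must be proved together, the non--fixed-point connectives go through mechanically via Lemma~\ref{lem:adequacies}, and the polarity flip at $\supset$ forces the two variances to be handled simultaneously. The gap is in the fixed-point cases, and it is not merely a matter of detail to be pushed to an appendix.

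In the $\mu$ case the functoriality term contains the syntactic predicate $\mu\ap(B'\app P')$, where $P'$ is the formula coming from the typing of $\pi$. Your induction hypothesis, applied to the smaller operator $\lambda p.\;B'\app p\app(\mu\ap(B'\app P'))\app\x$, would yield
\[
  \F{\ldots}^{+}(\x.\alpha.\pi)\in
  \interp{B'\app\X\app(\mu\ap(B'\app P'))\app\x \supset
          B'\app\X'\app(\mu\ap(B'\app P'))\app\x},
\]
where the frozen subformula $\mu\ap(B'\app P')$ is interpreted as $\interp{\mu\ap(B'\app P')}$, i.e.\ via $\interp{P'}$. But what you actually need, in order to invoke the induction adequacy and land in $\interp{\mu\ap(B'\app\X')\app\t}$, is the same statement with that frozen slot interpreted by the candidate $\interp{\mu\ap(B'\app\X')}$. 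Since $\X'$ is an \emph{arbitrary} predicate candidate, not assumed to be $\interp{P'}$, these do not coincide and the hypothesis is too weak. (Your notation $\mu(B'\X',\ldots)$ and $\F{\lambda p.\;B'\app p\app(\mu\ap(B'\app\X'))\app\x}^*$ already conflates the syntactic $P'$ inside the proof term with the semantic $\X'$ in the interpretation, which is where the issue hides.)

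The paper fixes this by strengthening the inductive statement: one carries along a vector of ``atomic'' predicate symbols $\vec{A}$ together with arbitrary predicate candidates $\vec{Z}$ that interpret them, and proves the four properties for operators $B$ of arity $n+1$ with the first $n$ arguments instantiated by $\vec{A}$ on the syntactic side and $\vec{Z}$ on the semantic side. In the $\mu$ step one then pushes $\mu\ap(B'\app\vec{A}\app P')$ into $\vec{A}$ and $\interp{\mu\ap(B'\app\vec{Z}\app\X')}$ into $\vec{Z}$, which is exactly the decoupling you need. The induction measure is the number of connectives in $B$ with the $\vec{A}$ treated as atoms; your depth-of-$p$ measure would also work once the generalization is in place, but without the generalization neither measure saves the argument.
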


\begin{lemma} \label{lem:adeqmu}
Let $\pi$ be a proof and $\X$ a predicate candidate.
If
$\pi\subst{\t/\x}$ is $(\alpha:\interp{B}\X\t\vdash \X\t)$-reducible
for any $\t$,
then
$\delta_\mu(\beta,\x.\alpha.\pi)$ is
$(\beta:\interp{\mu B \t'} \vdash \X\t')$-reducible
for any $\t'$.
\end{lemma}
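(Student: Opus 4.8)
The plan is to read the conclusion as the statement that, for every $\t'$ and every $\rho\in\interp{\mu B\ap\t'}$, the proof $\delta_\mu(\rho,\x.\alpha.\pi)$ lies in $\X\ap\t'$, and to establish this by a least-pre-fixed-point argument in the lattice of predicate candidates. By Definition~\ref{def:interp} we have $\interp{\mu B\ap\t'}=\lfp{\phi}(\t')$ for the monotone operator $\phi$ given there, so it suffices to exhibit a predicate candidate containing $\lfp{\phi}$ that forces membership in $\X$. First I would record that $\pi\in\SN$: a fresh proof variable is neutral and has no reducts, hence lies in the candidate $\interp{B}\X\t$ by condition~(3), so the hypothesis gives $\pi\subst{\t/\x}\in\X\t\subseteq\SN$, and since reductions lift through substitution (Proposition~\ref{prop:redsubst}) this yields $\pi\in\SN$. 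I would then introduce the predicate candidate
\[ \Y(\t') = \set{\rho\in\SN}{\delta_\mu(\rho,\x.\alpha.\pi'')\in\X\ap\t' \text{ for every } \pi'' \text{ with } \pi\reds\pi''} \]
Since $\rho\in\Y(\t')$ implies (taking $\pi''=\pi$) that $\delta_\mu(\rho,\x.\alpha.\pi)\in\X\ap\t'$, proving $\lfp{\phi}\subseteq\Y$ gives the lemma; and because $\lfp{\phi}$ is the least pre-fixed point, I only need that $\Y$ is a predicate candidate and that $\phi(\Y)\subseteq\Y$.

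Conditions~(1) and~(2) for $\Y$ are routine. Condition~(3), for a neutral $\rho$ whose reducts are in $\Y(\t')$, reduces to showing $\delta_\mu(\rho,\x.\alpha.\pi'')\in\X\ap\t'$ for each reduct $\pi''$ of $\pi$; I would dispatch this by an auxiliary induction on $\pi''\in\SN$ together with condition~(3) of $\X\ap\t'$ applied to the neutral term $\delta_\mu(\rho,\x.\alpha.\pi'')$, whose reducts either act inside $\rho$ (landing in $\X\ap\t'$ since $\rho$'s reducts are in $\Y$ and $\pi\reds\pi''$) or inside $\pi''$ (covered by the inner induction), there being no head redex as $\rho$ is neutral. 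The substantive step is $\phi(\Y)\subseteq\Y$: taking $\rho\in\phi(\Y)(\t')$, I would prove $\delta_\mu(\rho,\x.\alpha.\pi'')\in\X\ap\t'$ for every reduct $\pi''$ of $\pi$ by well-founded induction on the pair $(\rho,\pi'')$ in $\SN$, again through condition~(3) of $\X\ap\t'$. Reductions acting inside $\rho$ or $\pi''$ are handled by the induction hypothesis, using that $\phi(\Y)(\t')$ is closed under reduction and that each reduct $\pi''$ of $\pi$ again satisfies the lemma's hypothesis (by Proposition~\ref{prop:redsubst} and closure of $\X\t$ under reduction). The only remaining possibility is a head redex, \ie\ $\rho=\mu(B,\t',\rho_0)$.

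In that head-redex case the term reduces to $\pi''\subst{\t'/\x}\subst{(\F{\lambda p. B\ap p\ap\t'}(\x.\beta.\delta_\mu(\beta,\x.\alpha.\pi''))\app\rho_0)/\alpha}$, and the core of the argument is to place this in $\X\ap\t'$. From $\rho\in\phi(\Y)(\t')$ and $\rho\reds\mu(B,\t',\rho_0)$ I obtain $\rho_0\in\interp{B\ap\Y\ap\t'}$. I would then invoke the adequacy of functoriality (Lemma~\ref{lem:adeqfun}) for the monotone operator $\lambda p. B\ap p\ap\t'$ and the abstracted proof $\x.\beta.\delta_\mu(\beta,\x.\alpha.\pi'')$; its premise is exactly that $\delta_\mu(\cdot,\x.\alpha.\pi'')$ is $(\beta:\Y\ap\s\vdash\X\ap\s)$-reducible for every $\s$, which holds \emph{by the very definition of} $\Y$, since $\rho_1\in\Y(\s)$ forces $\delta_\mu(\rho_1,\x.\alpha.\pi'')\in\X\ap\s$ (as $\pi\reds\pi''$). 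This gives $\F{\lambda p. B\ap p\ap\t'}^{+}(\x.\beta.\delta_\mu(\beta,\x.\alpha.\pi''))\in\interp{B\ap\Y\ap\t'\supset B\ap\X\ap\t'}$; feeding it $\rho_0$ through the implication-elimination clause of Lemma~\ref{lem:adequacies} produces a member of $\interp{B}\X\t'$, and since $\pi''\subst{\t'/\x}$ is $(\alpha:\interp{B}\X\t'\vdash\X\ap\t')$-reducible by the (reduct-stable) hypothesis, the whole substitution lands in $\X\ap\t'$. I expect the main obstacle to be precisely this self-reference: the reduction rule for $\delta_\mu$ reintroduces $\delta_\mu$ underneath a functor, so the functoriality premise one must discharge is a copy of the statement being proved. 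Defining $\Y$ as the set of proofs whose $\delta_\mu$-elimination already lands in $\X$ is what makes that premise true by construction, and folding the closure over reducts of $\pi$ into $\Y$ is what keeps $\Y$ a genuine candidate even though $\pi$ itself may reduce.
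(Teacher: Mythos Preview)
Your argument mirrors the paper's: define a predicate candidate $\Y$ of those $\rho$ whose $\delta_\mu$-elimination already lands in $\X$, verify the candidate conditions (with an inner $\SN$-induction on $\pi$ for condition~(3)), show $\Y$ is a $\phi$-pre-fixed point, and discharge the head-redex case via functoriality for the body $\lambda p.\,B\ap p\ap\t'$. Your $\Y$, quantifying over all reducts $\pi''$ of $\pi$, is in fact equal to the paper's simpler $\Y(\t)=\set{\rho}{\delta_\mu(\rho,\x.\alpha.\pi)\in\X\t}$: one inclusion is the instance $\pi''=\pi$, and the other follows because $\X\t$ is closed under reduction and $\delta_\mu(\rho,\x.\alpha.\pi)\reds\delta_\mu(\rho,\x.\alpha.\pi'')$. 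So the extra quantification buys nothing beyond what the inner $\SN$-induction already provides.

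The one structural point you gloss over is the status of Lemma~\ref{lem:adeqfun}. You invoke it as an established result, but the paper emphasizes that Lemmas~\ref{lem:adeqfun}, \ref{lem:adeqmu} and \ref{lem:adeqnu} are mutually dependent---functoriality at a $\mu$ (or $\nu$) connective appeals back to the present lemma---and must be proved \emph{simultaneously} by induction on the number of connectives in the operator, in a form generalized with extra opaque predicate parameters $\vec{A}$ interpreted by arbitrary candidates $\vec{Z}$ (needed so that the large subformulas generated in the fixed-point cases of functoriality are treated atomically and the measure decreases). Your appeal to Lemma~\ref{lem:adeqfun} is for the strictly smaller operator $\lambda p.\,B\ap p\ap\t'$, so it is the right ``induction hypothesis~(1)'' call once that mutual induction is in place; but that scaffolding is part of the proof, not something available beforehand.
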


\begin{lemma} \label{lem:adeqnu}
Let $\pi$ be a proof and $\X$ a predicate candidate.
If
$\pi\subst{\t/\x}$ is $(\alpha:\X\t\vdash \interp{B}\X\t)$-reducible
for any $\t$,
then
$\nu(\beta,\x.\alpha.\pi)$ is
$(\beta:\X\t' \vdash \interp{\nu B \t'})$-reducible
for any $\t'$.
\end{lemma}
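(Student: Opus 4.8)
The goal is to show that, assuming $\pi\subst{\t/\x}$ is $(\alpha:\X\t\vdash\interp{B}\X\t)$-reducible for every $\t$, the coinduction proof term $\nu(\beta,\x.\alpha.\pi)$ is $(\beta:\X\t'\vdash\interp{\nu B\t'})$-reducible for every $\t'$. Unfolding the definition of reducibility, I must show that for any $\rho\in\X\t'$ the term $\nu(\rho,\x.\alpha.\pi)$ lies in $\interp{\nu B\t'}$. The plan is to exploit the semantic characterization of the greatest fixed point. Since $\interp{\nu B\t'}=\gfp{\phi}(\t')$ and the greatest fixed point is the union of all \emph{post-fixed points} (sets $\Y$ with $\Y\subseteq\phi(\Y)$), it suffices to exhibit a single post-fixed predicate candidate $\Y$ containing all the relevant terms $\nu(\rho,\x.\alpha.\pi)$, and then invoke $\Y\subseteq\gfp{\phi}$.

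First I would define the candidate family $\Y$ by setting $\Y(\t')$ to be the saturation of the set $\{\,\nu(\rho,\x.\alpha.\pi)\mid \rho\in\X\t'\,\}$; saturation guarantees that $\Y(\t')$ is a genuine reducibility candidate (it is the least candidate containing that set). The core of the argument is then to verify the post-fixed-point inequality $\Y(\t')\subseteq\phi(\Y)(\t')$, where by Definition~\ref{def:interp} $\phi(\Y)(\t')=\set{\pi''}{\delta_\nu(B,\t',\pi'')\in\interp[\E+\pair{p,\Y}]{Bp\t'}}$. Because $\phi(\Y)$ is itself a candidate, it is enough by clause (3) of the candidate definition to show membership for neutral generators and close under reducts; concretely, taking a generator $\nu(\rho,\x.\alpha.\pi)$ with $\rho\in\X\t'$, I must show $\delta_\nu(B,\t',\nu(\rho,\x.\alpha.\pi))\in\interp[\E+\pair{p,\Y}]{Bp\t'}=\interp{B}\Y\t'$. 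This neutral term has a single toplevel reduct, namely
\[ \F{\lambda p.~ Bp\t'}(\x.\beta.\nu(\beta,\x.\alpha.\pi))\app(\pi\subst{\t'/\x}\subst{\rho/\alpha}), \]
so by candidate clause (3) it suffices to place this reduct (and the reducts arising from reductions inside $\rho$ and $\pi$, which are handled by the induction on strong normalizability exactly as in the equality-elimination case of Lemma~\ref{lem:adequacies}) in $\interp{B}\Y\t'$.

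For the toplevel reduct I would chain the hypotheses. The hypothesis on $\pi$ gives $\pi\subst{\t'/\x}\subst{\rho/\alpha}\in\interp{B}\X\t'$ since $\rho\in\X\t'$. The inner coinduction term $\nu(\beta,\x.\alpha.\pi)$ witnesses, by construction of $\Y$, that $\beta\mapsto\nu(\beta,\x.\alpha.\pi)$ maps $\X\t$ into $\Y(\t)$ for every $\t$, \ie\ it is $(\beta:\X\t\vdash\Y\t)$-reducible; applying Lemma~\ref{lem:adeqfun} (functoriality) to the monotonic operator $\lambda p.~Bp\t'$ then yields $\F{\lambda p.~Bp\t'}(\x.\beta.\nu(\beta,\x.\alpha.\pi))\in\interp{(B\X\t')\supset(B\Y\t')}$. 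Feeding the $\interp{B}\X\t'$ argument through this functorial implication, via the application clause of Lemma~\ref{lem:adequacies}, lands the whole reduct in $\interp{B}\Y\t'$ as required. The \textbf{main obstacle} is organizing the interplay between the coinductive unfolding and functoriality without circularity: the definition of $\Y$ refers to $\nu(\beta,\x.\alpha.\pi)$, and verifying the post-fixed inequality requires functoriality applied to that very family. This is resolved by the outer greatest-fixed-point argument—we never need $\Y=\gfp\phi$, only $\Y\subseteq\phi(\Y)$, so establishing the one-step post-fixed inequality using Lemma~\ref{lem:adeqfun} suffices, and the coinductive reasoning is discharged entirely by the semantic $\gfp$ rather than by an induction on proof terms.
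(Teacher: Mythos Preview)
Your overall strategy coincides with the paper's: define a predicate candidate $\Y$, verify that it is a post-fixed point of the operator $\phi$ with $\interp{\nu B}=\gfp\phi$, and conclude $\Y\subseteq\interp{\nu B}$. Your choice of $\Y$ as the saturation of the generator set $\{\nu(\rho,\x.\alpha.\pi)\mid\rho\in\X\t'\}$ differs from the paper, which instead takes $\Y\t'$ to be the set of $\SN$ proofs all of whose reducts of the form $\nu(\rho,\x.\alpha.\pi')$ have $\rho\in\X\t'$ and $\pi'$ satisfying the coinvariant hypothesis. Both definitions yield valid candidates containing the terms of interest, and the post-fixed-point verification proceeds similarly in each case; the paper's choice has the minor convenience that $\nu(\rho'',\x.\alpha.\pi')\in\Y\t$ holds directly for reducts $\pi'$ of $\pi$, whereas with saturation you must observe that such terms are reducts of generators.

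The genuine gap is your black-box appeal to Lemma~\ref{lem:adeqfun}. In the paper, Lemmas~\ref{lem:adeqfun}, \ref{lem:adeqmu} and~\ref{lem:adeqnu} are not proved independently: the $\mu$ and $\nu$ cases of functoriality themselves invoke the (co)induction adequacy lemmas, so the three are established \emph{simultaneously} by induction on the number of logical connectives in $B$, in a form generalized over extra predicate parameters $\vec A$ interpreted by arbitrary candidates $\vec Z$ (the generalization is what keeps the induction measure from blowing up when already-processed subformulas are carried along). The circularity you flag---that $\Y$ mentions $\nu(\beta,\x.\alpha.\pi)$ while the post-fixed-point step needs functoriality on that same family---is indeed harmless and dissolved by the semantic $\gfp$, as you say. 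But the deeper circularity between functoriality and coinduction adequacy is not addressed by the $\gfp$ argument: when you invoke Lemma~\ref{lem:adeqfun} on $\lambda p.\,B\,p\,\t'$, that operator may itself contain nested fixed points, whose treatment requires exactly the lemma you are proving. Without the simultaneous structural induction on $B$, your call to Lemma~\ref{lem:adeqfun} is not yet justified.
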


\ignore{
\begin{lemma}
$\beta.\nu(B,\t,\beta,\x.\alpha.\pi)$ is
$(\X\vdash \interp{\nu B \t})$-reducible when
$\pi$ is $(\X\vdash\interp{B}\X)$-reducible.
\end{lemma}
}

\begin{proof}
These lemmas must be proved simultaneously,
in a generalized form that is detailed in Appendix~I.
There is no essential difficulty in proving the functoriality
lemma, using previously proved adequacy properties as well as
the other two lemmas for the fixed point cases.
The next two lemmas are the interesting ones, since they involve
using the properties of the fixed point interpretations to 
justify the (co)induction rules.
In the case of induction, we need to establish that
  $\delta_\mu(\rho,\x.\alpha.\pi) \in \X\t$ when
  $\rho\in\interp{\mu B \t}$.
In order to do this, it suffices to show that
$ \Y := \t \mapsto \set{\rho}{
         \delta_\mu(\rho,\x.\alpha.\pi) \in \X\t } $
is included in $\interp{\mu B}$.
This follows from the fact that
$\Y$ is a pre-fixed point of the operator $\phi$ such that
$\interp{\mu B} = \lfp{\phi}$, which can be proved
easily using the adequacy property for functoriality.
We proceed similarly for the coinduction rule,
showing that 
  \[ \begin{array}{l} \Y := \t \mapsto \set{\pi\in\SN}{
     \pi \ra^* \nu(\rho,\x.\alpha.\pi)
     \text { implies }
      \rho\in \X\t
      \mbox{ and } \\ \quad
      \pi[\t'/\x] \mbox{ is ($\alpha:\X\t'\vdash \interp{B}\X\t'$)-reducible
      for any $\t'$}}
  \end{array} \]
is a post-fixed point of the operator $\phi$ such
that $\interp{\nu B} = \gfp{\phi}$.
In both cases,
note that the candidate $\Y$ is \emph{a priori} not the interpretation
of any predicate; this is where we use the power of reducibility candidates.
\end{proof}


\begin{theorem}[Adequacy] \label{th:adequacy}
Let $\equiv$ be a congruence, $\M$ be a pre-model of $\equiv$ and
$\Gamma\vdashm\pi:P$ be a derivable judgment.
Then $\pi\sigma\in\interp{P}$ for any
substitution $\sigma \in \interp{\Gamma}$.
\end{theorem}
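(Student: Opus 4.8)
The plan is to prove the Adequacy Theorem by a straightforward induction on the structure of the typing derivation $\Gamma \vdashm \pi : P$. The entire infrastructure for this argument has already been assembled: Lemma~\ref{lem:adequacies} provides, for each logical connective and for equality, the precise statement that the corresponding interpretation is closed under the introduction and elimination rules, and Lemmas~\ref{lem:adeqfun}, \ref{lem:adeqmu} and \ref{lem:adeqnu} supply the analogous closure properties for functoriality, induction and coinduction. Thus each case of the induction amounts to matching the last rule applied in the derivation against the relevant clause of the appropriate adequacy lemma, after first invoking the induction hypothesis on the premises.

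Concretely, I would fix a substitution $\sigma \in \interp{\Gamma}$ and argue that $\pi\sigma \in \interp{P}$. Since the interpretation respects $\equiv$ (because $\M$ is a pre-model of $\equiv$), I may freely replace $P$ by any congruent formula matching the principal connective of the last rule. For a variable $\alpha$, the claim is immediate from the definition of $\sigma \in \interp{\Gamma}$. For the introduction and elimination rules of $\supset$, $\wedge$, $\vee$, $\top$, $\bot$, $\forall$ and $\exists$, the induction hypothesis applied to each premise (with the same $\sigma$, suitably extended by a fresh reducible witness in the binding cases) hands me exactly the hypotheses of the corresponding item of Lemma~\ref{lem:adequacies}, whose conclusion is the desired membership. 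The reducibility side-conditions needed by those items, such as ``$\pi$ is $(\alpha:\interp{P}\vdash\interp{Q})$-reducible,'' are precisely what the induction hypothesis delivers once we observe that extending $\sigma$ by an arbitrary reducible witness for the newly bound variable keeps it in the extended context's interpretation. The $\mu$-introduction and $\nu$-elimination cases likewise follow directly from the $(\mu)$ and $(\nu)$ items of Lemma~\ref{lem:adequacies}, using Lemma~\ref{lem:subst} to identify $\interp{B(\mu B)\t}$ and $\interp{B(\nu B)\t}$ with the unfolded fixed-point interpretations.

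The delicate cases, and where I expect the real work to sit, are the equality elimination rule and the two remaining fixed-point rules, namely $\mu$-elimination (induction) and $\nu$-introduction (coinduction). For equality elimination, the subtlety is that the substitution $\theta$ and the proof-substitution premise $\sigma : \Gamma\theta$ are suspended in the proof term, so I must check that after applying the outer substitution $\sigma$ the hypotheses of the $(=)$ item of Lemma~\ref{lem:adequacies} are met: the induction hypothesis on the branch proofs $\pi_i$ must be marshalled in the term-independent, $\theta'$-parametrised form that lemma requires, and this is exactly where the suspension of proof substitutions pays off, as flagged in the discussion following Lemma~\ref{lem:adequacies}. For induction and coinduction, the induction hypothesis on the step proof $\pi'$ establishes that $\pi'\sigma\subst{\t/\x}$ is reducible from $\interp{B}\X\t$ (resp.\ into $\interp{B}\X\t$) for the candidate $\X = \interp{S}$, which is precisely the premise of Lemma~\ref{lem:adeqmu} (resp.\ Lemma~\ref{lem:adeqnu}); applying that lemma to the major premise, whose interpretation membership comes from the induction hypothesis, yields membership in $\X\t = \interp{S\t}$.

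The main obstacle is therefore not any single case in isolation but the bookkeeping that ties the suspended substitutions in the equality and fixed-point proof terms to the parametrised reducibility statements of the adequacy lemmas; once the induction hypothesis is stated uniformly over all reducible witnesses for bound variables and over all the auxiliary substitutions $\theta_i\theta'$, every case reduces to citing the matching lemma. I would consequently keep the proof brief, remarking that it proceeds by induction on the derivation with each case discharged by the corresponding adequacy lemma, and directing the reader to Appendix~I for the detailed verification of the equality, induction and coinduction cases.
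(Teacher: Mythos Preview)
Your plan is essentially the paper's, and your case analysis is accurate. There is, however, one technical point that needs correction: the paper proceeds by induction on the \emph{height} of $\pi$, not by structural induction on the typing derivation, and this choice is not cosmetic.

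The reason is that in several cases --- $\forall$-introduction, $\exists$-elimination, equality elimination, $\mu$-elimination and $\nu$-introduction --- the adequacy lemmas demand the inductive hypothesis not on the premise derivation itself but on a \emph{term-substituted instance} of it. For instance, to obtain $(\lambda x.\pi')\sigma \in \interp{\forall x.\,Q}$ via Lemma~\ref{lem:adequacies} you must show $\pi'\sigma[t/x] \in \interp{Q[t/x]}$ for every $t$; this requires the hypothesis on the judgment $\Gamma \vdash \pi'[t/x] : Q[t/x]$, which is derivable by Lemma~\ref{lem:termsubstred} but is not a sub-derivation of the original. The paper's appendix handles this explicitly by observing that $\pi'[t/x]$ has the same height as $\pi'$, hence strictly smaller height than $\pi$. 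You list $\forall$ and $\exists$ among the routine cases and locate the ``marshalling over auxiliary substitutions $\theta_i\theta'$'' only at equality elimination, but the same phenomenon already appears in the quantifier rules and in (co)induction. Either switch to height induction, as the paper does, or strengthen the statement to quantify over all term substitutions (proving $\pi\theta\sigma \in \interp{P\theta}$ for every $\theta$ and every $\sigma \in \interp{\Gamma\theta}$); with that strengthening, structural induction on the derivation does go through.
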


\begin{proof}
By induction on the height of $\pi$,
using the previous adequacy properties.
\end{proof}

The usual corollaries hold.
Since variables belong to any candidate by condition (3),
we can take $\sigma$ to be the identity substitution,
and obtain that any well-typed proof is strongly normalizable.
Together with Lemma~\ref{lem:bot}, this means that our logic is
consistent.
Note that the suspended computations in the (co)induction
and equality elimination rules do not affect these corollaries,
because they can only occur in normal forms of specific types.
For instance, equality elimination cannot hide a non-terminating
computation if there is no equality assumption in the context.

\section{Recursive Definitions}\label{sec:rec-defs}

We now identify a class of rewrite rules relative to which we can
always build a pre-model.
This class supports recursive definitions whose use 
we illustrate through a sound formalization of a Tait-style argument. 


\subsection{Recursive rewriting that admits a pre-model}

The essential idea behind recursive definitions is that they are formed
gradually, following the inductive structure of one of their arguments,
or more generally a well-founded order on arguments.
In order to reflect this idea into a pre-model construction, we need
to identify all the atom interpretations that could be involved in 
the interpretation of a given formula. This is the purpose of the next
definition. 

\begin{definition}
We say that $P$ \emph{may occur} in $Q$ when $P = P'\theta$,
$P'$ occurs in $Q$, and $\theta$ is a substitution for
variables quantified over in $Q$.
\end{definition}

For example, $(a\ap t)$ may occur in $(a'\ap x \wedge \exists y.~ a\ap y)$
for any $t$. 

\begin{theorem} \label{th:rec}
Let ${\equiv}$ be a congruence defined by a rewrite system
rewriting terms to terms and atomic propositions to propositions,
and let $\M$ be a pre-model of ${\equiv}$.
Consider the addition of new predicate symbols $a_1, \ldots, a_n$ in
the language, together with the extension of the congruence resulting
from the addition of rewrite rules of the form $a_i \vec{t} \rew B$.
There is a pre-model of the extended congruence in
the extended language, provided that the following conditions hold.
\begin{itemize}
\item[(1)]
  If $(a_i\vec{t})\theta \equiv (a_i\vec{t'})\theta'$,
  $a_i\vec{t}\rew B$ and $a_i\vec{t'}\rew B'$,
  then
$B\theta \equiv B'\theta'$.
\item[(2)] There exists a well-founded order $\prec$ such that
  $a_j\vec{t'} \prec (a_i\vec{t})\theta$ whenever
  $a_i\vec{t} \rew B$ and $a_j\vec{t'}$ may occur in $B'\theta$.
\end{itemize}
\end{theorem}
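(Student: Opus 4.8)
The plan is to construct the pre-model for the extended congruence by well-founded recursion on the order $\prec$ provided by condition~(2). The starting point is the observation that a pre-model must assign a candidate $\hat{a_i}(\vec{t})$ to each atom $a_i\vec{t}$, and we already have interpretations for the old language via $\M$. For a new atom $a_i\vec{t}$ that is not a redex under the rewrite system, the interpretation should just be inherited from $\M$ (or set to some default candidate such as $\SN$); the real content is for atoms of the form $(a_i\vec{t})\theta$ that rewrite. The idea is to set $\widehat{a_i}((\vec{t})\theta) = \interp{B\theta}$ where $a_i\vec{t}\rew B$, because a pre-model of $\equiv$ must give congruent formulas the same interpretation and the rewrite rule makes $a_i\vec{t}\theta$ congruent to $B\theta$. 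Condition~(1) is exactly what guarantees this prescription is well-defined: if the same atom instance is reachable through two different rules or two different matching substitutions, the resulting right-hand sides are themselves congruent, so $\interp{B\theta}$ and $\interp{B'\theta'}$ agree since $\M$ already respects $\equiv$ on the relevant formulas.

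The central difficulty, and the reason condition~(2) is needed, is circularity: $\interp{B\theta}$ may itself mention atoms $a_j\vec{t'}$ whose interpretation we are in the middle of defining, so the naive equation $\widehat{a_i}(\vec{t}\theta)=\interp{B\theta}$ is not obviously a definition at all. This is where the ``may occur'' relation and the well-founded order become essential. First I would argue that $\interp{B\theta}$ depends only on the interpretations of those atoms that \emph{may occur} in $B\theta$, in the precise sense of the definition just given: the recursion in Figure~\ref{fig:interpretation} only ever consults $\hat{a}$ at atomic subformulas of $B\theta$, and instantiations of bound variables account for the $\theta$ in the ``may occur'' definition. By condition~(2), every such $a_j\vec{t'}$ satisfies $a_j\vec{t'}\prec (a_i\vec{t})\theta$, so the interpretation of $a_i\vec{t}\theta$ is defined purely in terms of strictly $\prec$-smaller atoms. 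Since $\prec$ is well-founded, this licenses a definition of $\hat a$ by well-founded recursion on $\prec$: for each atom we look at whether it is a redex, and if so set its value to the interpretation of the right-hand side, which by the inductive construction is already available.

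The main obstacle I expect is making the well-founded recursion rigorous in the presence of the congruence, since atoms are really $\equiv$-classes and the order $\prec$ is stated on representatives. I would handle this by first checking that $\prec$ interacts correctly with $\equiv$—i.e.\ that the set of atoms $\prec$-below a given atom, considered up to congruence, is the right object on which to recurse—and then defining $\hat a$ on congruence classes, using condition~(1) to see the value is independent of the representative. Having defined the assignment $\hat a$, I would then verify the two things required of a pre-model of the extended congruence: that $\hat a$ lands in $\C$ (every value is a genuine reducibility candidate), which follows because each $\interp{B\theta}$ is a candidate by the simultaneous induction justifying Definition~\ref{def:interp}; and that the resulting pre-model accords congruent formulas the same interpretation. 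For the latter, it suffices to check the new rewrite steps $a_i\vec{t}\theta \equiv B\theta$ are respected, which holds by construction, together with the fact that $\M$ already respected the old congruence and the interpretation of compound formulas is built compositionally from atom interpretations. The remaining closure of $\equiv$ under these generators then gives agreement on all congruent formulas, completing the construction of the desired pre-model.
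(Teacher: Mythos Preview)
Your proposal is correct and follows essentially the same approach as the paper: define the interpretation of the new atoms by well-founded recursion along $\prec$, using condition~(2) to guarantee that $\interp{B\theta}$ only depends on strictly smaller atoms and condition~(1) to make the assignment independent of the choice of rule instance. The paper organizes the recursion slightly differently---it builds an explicit family of intermediate pre-models $\M^{a_i\vec{t}}$ with a coherence property (your informal claim that ``$\interp{B\theta}$ depends only on atoms that may occur in $B\theta$'' is what they encode as property~(b))---but this is a matter of packaging rather than a different idea.
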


Note that condition (1) is not obviously satisfied, even when there
is a single rule per atom.
Consider, for example, $a \ap (0 \times x) \rew a' \ap x$ in a setting
where $0 \times x \equiv 0$: our condition requires that $a' \ap x \equiv a' \ap y$
for any $x$ and $y$, which is \emph{a priori} not guaranteed.
Condition (2) restricts the use of quantifiers but still allows useful 
constructions. Consider for example the Ackermann relation, built using a 
double induction on its first two parameters:
$ack~0~x~(s~x) \rew \top$,
$ack~(s~x)~0~y \rew ack~x~(s~0)~y$ and
$ack~(s~x)~(s~y)~z \rew \exists r.~ ack~(s~x)~y~r \wedge ack~x~r~z$.
The third rule requires that
$ack~x~r~z \prec ack~(s~x)~(s~y)~z$ for any $x$, $y$, $z$ and $r$,
which is indeed satisfied with a lexicographic ordering.

\begin{proof}
We only present the main idea here;
a detailed proof may be found in Appendix~I.
We first build pre-models $\M^{a_i\vec{t}}$ that
are compatible with instances $a_j\vec{t'}\rew B$
of the new rewrite rules for $a_j\vec{t'}\preceq a_i\vec{t}$.
This is done gradually following the order $\prec$,
using a well-founded induction on $a_i\vec{t}$.
We build $\M^{a_i\vec{t}}$ by aggregating smaller pre-models
$\M^{a_j\vec{t'}}$ for $a_j\vec{t'}\prec a_i\vec{t}$,
and adding the interpretation $\hat{a_i}\vec{t}$.
To define it, we consider rule instances of the form $a_i\vec{t} \rew B$.
If there is none we use a dummy interpretation:
$\hat{a_i}\vec{t} = \SN$. Otherwise, condition (1) imposes
that there is essentially a single possible such rewrite
modulo the congruence, so it suffices to choose $\interp{B}$
as the interpretation $\hat{a_i}\vec{t}$ to satisfy
the new rewrite rules.
Finally, we aggregate interpretations from
all the pre-models $\M^{a_i\vec{t}}$ to obtain a pre-model
of the full extended congruence.
\end{proof}

This result can be used to obtain pre-models for complex definition schemes,
such as ones that iterate and interleave groups of fixed-point and recursive 
definitions. Consider, for example,
$a~(s~n) \rew a~n \supset a~(s~n)$.
While this rewrite rule does not directly satisfy the conditions of
Theorem~\ref{th:rec}, 
it can be rewritten into the form $a~(s~n) \rew \mu Q.~ a~n \supset
Q$, which does satisfy these conditions. 

\subsection{An application of recursive definitions}

Our example application is the formalization of the Tait-style
argument of strong normalizability for the simply typed
$\lambda$-calculus. 
We assume term-level sorts $tm$ and $ty$ corresponding to representations 
of $\lambda$-terms and simple types, and symbols $\iota : ty$, $arrow : ty\ra ty\ra 
ty$, $app : tm \ra tm \ra tm$ and $abs : (tm \ra tm) \ra tm$.
We identify well-formed types through an inductive predicate:
\[ isty \eqdef \mu (\lambda T \lambda t.~ t = \iota \vee
     \exists t' \exists t''.~ t = arrow~t'~t'' \wedge T~t' \wedge T~t'') \]
We assume a definition of term reduction and strong 
normalization, denoting the latter predicate by $sn$.
Finally, we define $red~m~t$, expressing that
$m$ is a reducible $\lambda$-term of type $t$,
by the following rewrite rules:
\begin{eqnarray*}
  red~m~\iota &\rew& sn~m \\
  red~m~(arrow~t~t') &\rew&
    \forall n.~ red~n~t \supset red~(app~m~n)~t'
\end{eqnarray*}
This definition satisfies the conditions of Theorem~\ref{th:rec},
taking as $\prec$ the order induced by the subterm ordering
on the second argument of $red$.
We can thus safely use it.

With these definitions, our logic allows us to mirror very closely
the strong normalization proof presented in \cite{girard89book}.
For instance, consider proving that reducible terms are strongly normalizing:
\[ \forall m \forall t.~ isty~t \supset red~m~t \supset sn~m \]
The paper proof is by induction on types,
which corresponds in the formalization to an elimination on $isty~t$.
In the base case, we have to derive $red~m~\iota \supset sn~m$
which is simply an instance of $P \supset P$ modulo our congruence.
In the arrow case, we must prove $red~m~(arrow~t~t') \supset sn~m$.
The hypothesis $red~m~(arrow~t~t')$ is congruent to
$\forall n.~ red~n~t \supset red~(app~m~n)~t'$ and we can
show that variables are always reducible,\footnote{
  This actually has to be proved simultaneously with $red~m~t \supset sn~m$,
  but we ignore it for the simplicity of the presentation.
} which gives us
$red~(app~m~x)~t'$. From there, we obtain $sn~(app~m~x)$ by induction 
hypothesis, from which we can deduce $sn~m$ with a little more work.

The full formalization, which is too detailed to present here, is shown
in Appendix~II. It has been tested
using the proof assistant Abella \cite{gacek08ijcar}.
The logic that underlies Abella features fixed-point
definitions, closed-world equality and generic quantification.
The last notion is useful when dealing with binding structures,
and we have employed it in our formalization although it is not 
available yet in our logic. 
Abella does not actually support recursive
definitions. To get around this fact, we have entered the one we need
as an inductive definition, and ignored the warning provided 
about the non-monotonic clause while making sure to use an unfolding
of this inductive definition in the proof only when this is allowed
for recursive definitions. In the future, we plan to extend Abella to
support recursive definitions based on the theory developed in this
paper. This would mean allowing such definitions as a separate class,
building in a test that they satisfy the criterion described in
Theorem~\ref{th:rec} and properly restricting the use of these
definitions in proofs. Such an extension is obviously compatible with
all the current capabilities of Abella and would support additional
reasoning that is justifiably sound.

\section{Related and Future Work} \label{sec:conclusion}

The logical system that we have developed is obviously
related to deduction modulo. In essence, it extends that system with
a simple yet powerful treatment of fixed-point definitions. The
additional power is obtained from 
two new features: fixed-point combinators and closed-world
equality.
If our focus is only on provability, the capabilities arising
from these features may perhaps be encoded
in deduction modulo. Dowek and Werner provide an encoding of
arithmetic in deduction modulo, and also show how to build pre-models for
some more general fixed-point constructs~\cite{dowek03jsl}.
Regarding equality, Allali~\cite{allali07types} has shown that a
more algorithmic version of equality may be defined through the congruence, 
which allows to simplify some equations by computing.
Thus, it simulates some aspects of
closed-world equality. 
However, the principle of substitutivity has to be recovered
through a complex encoding involving inductions on the term structures.
In any case, our concern here is not simply with provability;
in general, we do not follow the project of deduction modulo to have
a logic as basic as possible in which stronger systems are then encoded.
Rather, we seek to obtain meaningful proof structures,
whose study can reveal useful information. For instance,
in the context of proof-search, it has been shown that a direct treatment
of fixed-point definitions allows for stronger focused proof 
systems~\cite{baelde12tocl} which have served as a basis for several 
proof-search implementations~\cite{baelde07cade,baelde10ijcar}.
This goal also justifies why we do not simply use powerful systems
such as the Calculus of Inductive Constructions~\cite{paulin93tlca}
which obviously supports inductive as well as recursive definitions;
here again we highlight the simplicity of our (co)induction rules 
and of our rich equality elimination principle.

Our logic is also related to logics of fixed-point 
definitions~\cite{schroeder-Heister93lics,mcdowell00tcs,momigliano08corr}.
The system we have described represents an advance over these logics in that
it adds to them a rewriting capability. As we have seen, this
capability can be used to blend computation and deduction in natural ways
and add support for recursive definitions --- a similar support may also
be obtained in other ways~\cite{tiu12ijcar}.
Our work also makes important contributions to the understanding of 
closed-world equality.
We have shown that it is compatible with an equational theory on
terms.
We have, in addition, resolved some problematic issues related to this
notion that affect the stability of finite proofs under reduction.
This has allowed us to prove for the first time a strong
normalizability result for logics of fixed-point definitions.
Our calculus is, at this stage, missing a treatment of generic
quantification present in some of the alternative logics
\cite{gacek09phd,gacek11ic,miller05tocl}. We plan to include this
feature in the future, and do not foresee any difficulty in doing so since
it has typically been added in a modular fashion to such logics.
This addition would make our logic an excellent choice for formalizing
the meta-theory of  computational and logical systems. 

An important topic for further investigation of our system is proof
search. 
The distinction between computation and deduction is critical
for theorem proving with fixed point definitions.
For instance, in the Tac system~\cite{baelde10ijcar}, which is based on logics of 
definitions, automated (co)inductive theorem proving relies heavily on
ad-hoc annotations that identify computations. In that context, our treatment 
of recursive definitions seems like a good candidate more a more principled 
separation of computation and deduction.
Finally, now that we have refactored equality rules to simplify the proof 
normalization process, we should study their proof search behavior.
The new equality elimination rule seems difficult to analyze at first. However, 
we hope to gain some insights from studying its use in
settings where the old rule (which it subsumes) is practically
satisfactory, progressively moving to newer contexts where it offers
advantages. 
We note in this regard that the new complexity is in fact welcome: the
earlier infinitely branching treatments of closed-world equality had a
simple proof-search treatment in theory, but did not provide a useful
handle to study the practical difficulties
of automated theorem proving with complex equalities.

\medskip
\noindent {\bf Acknowledgements.} The authors thank Gilles Dowek,
Benjamin Werner and the reviewers for helpful discussions and
comments. This work has been funded by the NSF grant 
CCF-0917140; opinions, findings, and conclusions or recommendations
expressed in this paper are those of the authors and do not
necessarily reflect the views of the National Science
Foundation. Nadathur and Baelde are currently receiving support
respectively
from the Velux Fonden and grant 10-092309 from the Danish Council for Strategic
Research, Programme Commission on Strategic Growth Technologies.

\bibliographystyle{IEEEtranS}
\bibliography{master}

\newpage

\begin{figure*}[!]
\centering
\[ \infer{\Gamma \vdash \F{\lambda p. \nu\ap (B\app p)\ap \t}^{+}(\x.\alpha.\pi) :
          \nu\ap (B\app P)\ap \t \supset \nu\ap (B\app P')\ap \t}{
   \infer{\Gamma, \beta : \nu\ap (B \app P)\ap \t
     \vdash \nu(\beta,\ldots) : \nu \ap (B \app P') \app \t}{
   \infer{\ldots 
          \vdash \beta : \nu\ap(B\app P)\ap\t}{} &
   \infer{\vphantom{\t}
      \ldots, \gamma : \nu\ap(B\app P)\ap\x \vdash 
      (\F{(\lambda p. B\app p\app (\nu\ap (B\app P))\ap \x)}^{*}(\x.\alpha.\pi))\app \delta_\nu(B\app P,\x,\gamma)
        : B \app P' \app (\nu\ap(B\app P)) \x}{
   \ldots \vdash \delta_\nu(B\app P,\x,\gamma) : B \app P \app (\nu\ap(B\app P)) \app \x &
   \ldots \vdash \F{(\lambda p. B\app p\app (\nu\ap (B\app P))\ap \x)}^{+}(\x.\alpha.\pi) :
      B \app P \app (\nu\ap(B\app P)) \x \supset B \app P' \app (\nu\ap(B\app P)) \app \x
   }}} \]
\caption{Typing functoriality for greatest fixed-points}
\label{fig:Fnu}
\end{figure*}

\section*{Appendix I: Proofs of Lemmas and Theorems} 
\setcounter{subsection}{0}

\subsection{Proof of Lemma~\ref{lem:bot}}

We first observe that typed normal forms are characterized as usual:
no introduction term is ever found as the main parameter of an elimination.
This standard property is not affected by our new constructs. For example,
consider the case of equality: $\elimeq{\ldots,\refl{t}}{(\theta_i.\pi_i)_i}$
can always be reduced by definition of complete sets of unifiers.
The rest of the argument follows the usual lines:
the proof cannot end with an elimination, otherwise it would have to be
a chain of eliminations terminated with a proof variable, but the context
is empty;
it also cannot end with an introduction since there is no
introduction for $\bot$ and the congruence cannot equate it
with another connective.

\subsection{Proof of Lemma~\ref{lem:adequacies}}

All introduction rules are treated in a similar fashion:
\begin{itemize}

\item
  \emph{If $\pi$ is $(\alpha:\interp{P} \vdash \interp{Q})$-reducible, then
    $\lambda \alpha.\pi \in \interp{P \supset Q}$.}

  First, $\lambda\alpha.\pi$ is \SN, like all reducible proof-terms,
  because variables belong to all candidates, and candidates are sets
  of \SN\ proofs.
  Now, assuming $\lambda\alpha.\pi\ra^* \lambda\alpha.\pi'$,
  we seek to establish that
  $\pi'[\pi''/\alpha] \in \interp{Q}$ for any $\pi''\in\interp{P}$.
  By definition of reducibility, $\pi[\pi''/\alpha]$ belongs to
  $\interp{Q}$, and we conclude by stability of candidates under reduction
  since $\pi[\pi''/\alpha]\ra^* \pi'[\pi''/\alpha]$.
  
\item The cases for $\wedge$, $\vee$ and $\exists$ are proved similarly.
\item The cases for $\top$ and equality are trivial.

\item
  \emph{If $\pi[t/x] \in \interp{P[t/x]}$ for any $t$, then
    $\lambda x.\pi \in \interp{\forall x.~ P}$.}

  Assume $\lambda x.\pi \ra^* \lambda x.\pi'$.
  It must be the case that $\pi\ra^*\pi'$, and for any $\t$
  we have $\pi[t/x]\ra^*\pi'[t/x]$ by Proposition~\ref{prop:redsubst}
  and thus $\pi'[t/x]\in\interp{P[t/x]}$ as needed.

\item
  \emph{If $\pi\in\interp{B(\mu B)\t}$,
     then $\mu(B,\t,\pi)\in\interp{\mu B \t}$.}

  From Definition~\ref{def:interp}, we have
  $\interp{\mu B\t} = \lfp{\phi}(\t) = \phi(\interp{\mu 
   B})(\t)$. Using Lemma~\ref{lem:subst}, we obtain that
  $\interp{\mu B \t}
  = \set{\rho\in\SN}{\rho \ra^* \mu(B,\t,\pi') \mbox{ implies } \pi'\in
    \interp{B(\mu B)\t}}$.
  It is now easy to see that
  $\pi\in\interp{B (\mu B) \t}$ implies $\mu(B,\t,\pi)\in\interp{\mu B \t}$:
  for any reduction $\mu(B,\t,\pi) \ra^* \mu(B,\t,\pi')$
  it must be the case that $\pi\ra^*\pi'$ and thus $\pi'\in\interp{B(\mu B)\t}$.

\end{itemize}

Elimination rules also follow a common scheme:
\begin{itemize}

\item
 \emph{If $\pi\in\interp{P\supset Q}$ and $\pi'\in\interp{P}$, then
    $\pi\ap\pi'\in\interp{Q}$.}

 We proceed by induction on the 
 strong normalizability of $\pi$ and $\pi'$.
 By the candidate of reducibility condition on neutral terms, it suffices to 
 show that all immediate reducts $\pi\ap\pi'\ra\pi''$ belong to $\interp{Q}$.
 If $\pi''$ is obtained by a reduction inside $\pi$ or $\pi'$,
 then we conclude by induction hypothesis since the resulting subterm
 still belongs to the expected interpretation.
 Otherwise, it must be that $\pi = \lambda\alpha.\rho$ and the reduct
 is $\rho[\pi'/\alpha]$. In that case we conclude by definition of
 $\pi \in \interp{P \supset Q}$.

\item The cases of $\wedge$, $\vee$ and $\bot$ are treated similarly.

\item
  \emph{If $\pi\in\interp{\forall x.~ P}$, then $\pi\ap t \in \interp{P[t/x]}$.}

  We proceed by induction on the strong normalizability of $\pi$,
  considering all one-step reducts of the neutral term $\pi\ap t$.
  Internal reductions are handled by induction hypothesis.
  If $\pi = \lambda x.\pi'$, our term may reduce at toplevel into
  $\pi'[t/x]$. In that case we conclude by definition of $\interp{\forall x.~ P}$.

\item
  \emph{If $\pi \in \interp{\exists x.~ P}$ and $\pi'\subst{t/x}$ is
    $(\alpha:\interp{P\subst{t/x}}\vdash \interp{Q})$-reducible for any $t$, then
    $\elimex{\pi}{x.\alpha.\pi'} \in \interp{Q}$.}

  We proceed by induction on the strong normalizability of $\pi$ and $\pi'$,
  considering all one-step reducts. The internal reductions are handled
  by induction hypothesis. A toplevel reduction into $\pi'[t/x][\pi''/\alpha]$
  may occur when $\pi = \pair{t,\pi''}$ in which case we have
  $\pi''\in\interp{P[t/x]}$ by hypothesis on $\pi$ and definition
  of $\interp{\exists x.~ P}$. We conclude by hypothesis on $\pi'[t/x]$.

\item
  \emph{If $\pi \in \interp{t\theta = t'\theta}$, $\sigma \in 
    \interp{\Gamma\theta}$ and
    $\pi'_i\theta'$ is
    $(\interp{\Gamma\theta_i\theta'}\vdash\interp{P\theta_i\theta'})$-reducible
    for any $i$ and $\theta'$,
    then $\elimeq{\Gamma,\theta,\sigma,t,t',P,\pi}{(\theta_i.\pi_i)_i} \in \interp{P\theta}$.}

  We proceed by induction on the strong normalizability of the subderivations
  $\pi$, $\sigma$ and $\pi_i$.
  In order to show that a neutral term belongs to a candidate, it suffices to
  consider all its one-step reducts. Reductions occurring inside subterms are
  handled by induction hypothesis. We may also have a toplevel redex when
  $t\theta \equiv t'\theta$ and $\pi = \refl{t\theta}$, reducing to
  $\pi_i\theta'\sigma$ where $\theta'$ is such that
  $\theta_i\theta' \equiv \theta$.
  By hypothesis, $\pi_i\theta'$ is
  $(\interp{\Gamma\theta_i\theta'} \vdash \interp{P\theta_i\theta'})$-reducible
  and $\sigma \in \interp{\Gamma\theta} = \interp{\Gamma\theta_i\theta'}$,
  and thus we have $\pi_i\theta'\sigma\in\interp{P\theta}$ as expected.

\item The case of $\delta_\nu$ is singular, as it follows directly from the definition
  of the interpretation of greatest fixed points. Indeed, we obtain exactly
  $\interp{\nu B \t} = \set{\pi}{\delta_\nu(B,\t,\pi)\in\interp{B(\nu B)\t}}$
  by unfolding the interpretation of greatest fixed points like we did for the 
  least fixed point case above,
  using Definition~\ref{def:interp} and Lemma~\ref{lem:subst}.

\end{itemize}

\subsection{
  Proof of Lemmas~\ref{lem:adeqfun}, \ref{lem:adeqmu} and \ref{lem:adeqnu}}

Let us first introduce the following notation for conciseness:
we say that $\pi$ is $(\vec{x},\X\x\vdash\Y\x)$-reducible
when $\pi\subst{\t/\x}$ is $(\X\t\vdash \Y\t)$-reducible for any $\t$.

We prove the three lemmas simultaneously, generalized as follows
for a predicate operator $B$ of second-order arity\footnote{
  In (1) and (2), $B$ has type $o^{n+1} \ra o$.
  In (3) and (4) we are considering $B$ of type
  $o^n \ra (\vec{\gamma}\ra o) \ra (\vec{\gamma} \ra o)$.
} $n+1$,
predicates $\vec{A}$ and predicate candidates $\vec{Z}$:
\begin{itemize}
\item[(1)] For any $(\x,\X\x\vdash\X'\x)$-reducible $\pi$,
  $\F{B\vec{A}}^+(\x.\alpha.\pi) \in
    \interp{B\vec{Z}\X\supset B\vec{Z}\X'}$.
\item[(2)] For any $(\x,\X\x\vdash\X'\x)$-reducible $\pi$,
  $\F{B\vec{A}}^-(\x.\alpha.\pi) \in
    \interp{B\vec{Z}\X' \supset B\vec{Z}\X}$.
\item[(3)] For any $(\x,\interp{B}\vec{Z}\X\x \vdash \X\x)$-reducible $\pi$,
  $\delta_\mu(\beta,\x.\alpha.\pi)$ is
  $(\interp{\mu (B\vec{Z}) \t} \vdash \X\t)$-reducible.
\item[(4)] For any $(\x,\X\x\vdash \interp{B}\vec{Z}\X\x)$-reducible $\pi$,
  $\nu(\beta,\x.\alpha.\pi)$ is
  $(\X\t \vdash \interp{\nu (B\vec{Z}) \t})$-reducible.
\end{itemize}
We proceed by induction on the number of logical connectives in $B$.
The purpose of the generalization is to keep formulas $\vec{A}$ out of
the picture: those are potentially large but are treated atomically in the 
definition of functoriality, moreover they will be interpreted by candidates
$\vec{Z}$ which may not be interpretations of formulas.
We first prove (3) and (4) by relying on smaller instances of (1),
then we show (1) and (2) by relying on smaller instances of all four 
properties but also instances of (3) and (4) for an operator of the same size.
\begin{itemize}
\item[(1)] We proceed by case analysis on $B$.
  When $B = \lambda \vec{p} \lambda q. q \t$, we have to establish that
  $\F{B\vec{A}}^+(\x.\alpha.\pi) = \lambda \beta. \pi[\t/\x][\beta/\alpha]
  \in \interp{P \t \supset P \t'}$. It simply follows from 
  Lemma~\ref{lem:adequacies} and the hypothesis
  on $\pi$. When $B = \lambda \vec{p} \lambda q. B'\vec{p}$ where $q$ does not
  occur in $B'$, we have to show
  $\F{B\vec{A}}^+(\x.\alpha.\pi) = \lambda \beta. \beta
  \in \interp{B'\vec{Z} \supset B'\vec{Z}}$, which is trivial.

  In all other cases, we use the adequacy properties and conclude by
  induction hypothesis. Most cases are straightforward,
  relying on the adequacy properties.
  In the implication case, \ie\ $B$ is $B_1 \supset B_2$,
  we use induction hypothesis (2) on $B_1$ and (1) on $B_2$.
\ignore{
  Consider for example
  $\F{B_1\vec{A}\wedge B_2\vec{A}}^{+}(\x.\alpha.\pi) =
    \lambda\beta. \pair{
                 \F{B_1\vec{A}}^{+}(\x.\alpha.\pi)(\fst{\alpha}),
                 \F{B_2\vec{A}}^{+}(\x.\alpha.\pi)(\snd{\alpha})}$.
  By induction hypothesis, $\F{B_i\vec{A}}^{+}(\x.\alpha.\pi) \in
  \interp{B_i\vec{Z}\X \supset B_i\vec{Z}\X'}$.
  Now, $\fst{\alpha}$ is
  $(\alpha:\interp{B_1\vec{Z}\X \wedge B_2\vec{Z}\X} \vdash
    \interp{B_1\vec{Z}X})$-reducible, and
  $\F{B_1\vec{A}}^{+}(\x.\alpha.\pi)(\fst{\alpha})$ is thus
  $(\alpha:\interp{B_1\vec{Z}\X \wedge B_2\vec{Z}\X} \vdash
    \interp{B_1\vec{Z}\X'})$-reducible.
  The same holds for the other component of the pair, so the whole pair is
  $(\alpha:\interp{B_1\vec{Z}\X \wedge B_2\vec{Z}\X} \vdash
    \interp{B_1\vec{Z}\X'\wedge B_2\vec{Z}\X'})$-reducible, and
  $\F{B\vec{A}}^+(\x.\alpha.\pi) \in \interp{B\vec{Z}\X \supset B\vec{Z}\X'}$.
}
  Let us only detail the least fixed point case:
  \[ \begin{array}{l}
     \F{\lambda q. \mu (B \vec{A} q) \t}^{+}(\x.\alpha.\pi) \eqdef
     \\ \quad
     \lambda\beta.~ \delta_\mu(\beta, \x.\gamma. \mu (B \vec{A} P',\x,
     \F{\lambda q. B \vec{A} q (\mu (B \vec{A} P')) \x}^{+}(\x.\alpha.\pi) \gamma))
  \end{array} \]
  By induction hypothesis (1) with
  $B := \lambda \vec{p} \lambda p_{n+1} \lambda q. B \vec{p} q p_{n+1} \x$,
  $A_{n+1} := \mu (B \vec{A} P')$
  and $Z_{n+1} := \interp{\mu (B \vec{Z} \X')}$, we have:
  \[ \F{\ldots}^{+}(\x.\alpha.\pi)
  \in \interp{B \vec{Z} \X (\mu (B \vec{Z} \X')) \x \supset
              B \vec{Z} \X' (\mu (B \vec{Z} \X')) \x} \]
  We can now apply the $\supset$-elimination and
  $\mu$-introduction principles to obtain that
  $\mu (B \vec{A} P',\x,
      (\F{\ldots}^{+}(\x.\alpha.\pi))
     \gamma)$
  is $(\gamma:\interp{B\vec{Z}\X(\mu(B\vec{Z}\X'))\x}\vdash
       \interp{\mu(B\vec{Z}\X')\x})$-reducible.
  Finally, we conclude using induction hypothesis (3) with
  $B := \lambda\vec{p}\lambda p_{n+1}\lambda q\lambda \x.~
    B\vec{p} p_{n+1} q \x$,
  $A_{n+1} := P$, $Z_{n+1} := \X$
  and $\X := \interp{\mu (B \vec{Z} \X')}$:
  $ 
     \F{\lambda q. \mu (B \vec{A} q) \t}^{+}
  $ is $(\interp{\mu (B \vec{Z} \X) \t} \vdash
                  \interp{\mu (B \vec{Z} \X') \t})$-reducible.

\item[(2)] Antimonotonicity: symmetric of monotonicity, without the
  variable case.
\item[(3)] Induction:
  we seek to establish that
  $\delta_\mu(\rho,\x.\alpha.\pi) \in \X\t$ when
  $\rho\in\interp{\mu (B\vec{Z}) \t}$
  and $\pi$ is $(\x,\interp{B}\vec{Z}\X\x \vdash \X\x)$-reducible.
  We shall show that $\interp{\mu (B\vec{Z}) \t}$ is included
  in the set of proofs for which this holds, by showing
  that (a) this set is a candidate and (b) it is a prefixed point
  of $\phi$ such that $\interp{\mu (B\vec{Z}) \t} = \lfp{\phi}$.
  Let us consider
   \[ \Y := \t \mapsto \set{\rho}{
         \delta_\mu(\rho,\x.\alpha.\pi) \in \X\t } \]

  First, $\Y\t$ is a candidate for any $\t$:
  conditions (1) and (2) are inherited
  from $\X\t$, only condition (3) is non-trivial.
  Assuming that every one-step reduct of a neutral derivation
  $\rho$ belongs to $\Y$, we prove
  $\delta_\mu(\rho,\x.\alpha.\pi)\in\X\t$.
  This is done by induction on the strong normalizability of $\pi$.
  Using condition (3) on $\X\t$, it suffices to consider one-step 
  reducts: if the reduction takes place
  in $\rho$ we conclude by hypothesis; if it takes place in $\pi$ we
  conclude by induction hypothesis; finally, it cannot take place
  at toplevel because $\rho$ is neutral.

  We now establish that $\phi(\Y) \subseteq \Y$: assuming $\rho\in\phi(\Y)\t$,
  we show that $\delta_\mu(\rho,\x.\alpha.\pi)\in\X\t$. This is done
  by induction on the strong normalizability of $\rho$ and $\pi$,
  and it suffices to show that each one step reduct belongs to $\X\t$,
  with internal reductions handled simply by induction hypothesis.
  Therefore we consider the case where $\rho = \mu(B\vec{A},\t,\pi')$ and
  our derivation reduces to
  $\pi[\t/\x][\F{\lambda q. B\vec{A}q\t}(
      \x.\beta.\delta_\mu(\beta,\x.\alpha.\pi))\pi'/\alpha]$.
  Now, recall that
  $\pi[\t/\x]$ is $(\interp{B\vec{Z}}\X\t \vdash \X\t)$-reducible.
  Since $\mu(B\vec{A},\t,\pi')=\rho\in\phi(\Y)\t$,
  we also have $\pi'\in \interp{B\vec{Z}\Y\t}$.
  By induction hypothesis (1) we obtain that
  $\F{\lambda q.B\vec{A}q\t}(\x.\beta.\delta_\mu(\beta,\x.\alpha.\pi))$
  is $(\interp{B\vec{Z}\Y\t}\vdash \interp{B\vec{Z}\X\t})$-reducible,
  since $\delta_\mu(\beta,\x.\alpha.\pi)$ is $(\x, \beta:\Y\x \vdash \X\x)$-reducible
  by definition of $\Y$.
  We conclude by composing all that.

\item[(4)] Coinduction is similar to induction.
  Let us consider 
  \[ \begin{array}{l} \Y := \t \mapsto \set{\pi\in\SN}{
     \pi \ra^* \nu(\rho,\x.\alpha.\pi)
     \text { implies } \\ \quad
      \rho\in \X\t
      \mbox{ and }
      \pi \mbox{ is ($\x,\alpha:\X\x\vdash \interp{B}\vec{Z}\X\x$)-reducible}}
  \end{array} \]
  It is easy to show that $\Y$ is a predicate candidate,
  and if we show that $\Y \subseteq \interp{\nu (B\vec{Z})}$ we can conclude
  because the properties on $\rho$ and $\pi$ are preserved by reduction.

  We have $\interp{\nu (B\vec{Z})} = \gfp{\phi}$, so it suffices to establish that $\Y$ 
  is a post-fixed point of $\phi$, or in other words that for any $\t$ and 
  $\pi\in\Y\t$, $\delta_\nu(B\vec{A},\t,\pi)\in \interp{B}\vec{Z}\Y\t$.
  We do this as usual by induction on the strong normalizability of $\pi$ and
  the only interesting case to consider is the toplevel reduction,
  which can occur when $\pi = \nu(\rho,\x.\alpha.\pi')$.
  The reduct is
    $\F{\lambda p. B\vec{A} p \t}(\x.\beta.\nu(\beta,\x.\alpha.\pi'))
     \app(\pi'\subst{\t/\x}\subst{\rho/\alpha})$.
  It does belong to $\interp{B\vec{Z}\Y\t}$ because:
  $\rho\in\X\t$ by definition of $\pi\in\Y\t$;
  $\pi'\subst{\t/\x}$ is $(\alpha:\X\t\vdash 
    \interp{B}\vec{Z}\X\t)$-reducible for the same reason;
  and finally
  $\F{\lambda p.~ B\vec{A} p \t}(\x.\beta.\nu(\beta,\x.\alpha.\pi'))
   \in \interp{B\vec{Z}\X\t\supset B\vec{Z}\Y\t}$ by (1) since
  $\nu(\alpha,\x.\alpha.\pi')$ is $(\x;\alpha:\X\x\vdash \Y\x)$-reducible
  by definition of $\Y$.
\end{itemize}

\subsection{Proof of Theorem~\ref{th:adequacy}}

We proceed by induction on the height of $\pi$.
If $\pi$ is a variable, then $\pi\sigma = \sigma(\alpha)$.
Thus, it belongs to $\interp{\Gamma(\alpha)}$ by hypothesis,
and since we are considering a pre-model of the congruence,
and $P \equiv \Gamma(\alpha)$, we have $\pi\sigma\in\interp{P}$.

Other cases follow from the adequacy properties established
previously.
For instance,
if $\pi$ is of the form $\lambda \alpha. \pi'$, then
$P \equiv P_1 \supset P_2$ and $\interp{P} = \interp{P_1 \supset P_2}$.
By induction hypothesis,
$\pi'$ is $(\Gamma,\alpha:\interp{P_1} \vdash \interp{P_2})$-reducible.
Equivalently, $\pi'\sigma$ is $(\interp{P_1}\vdash\interp{P_2})$-reducible,
and we conclude using Lemma~\ref{lem:adequacies}.
In the case where $\pi = \lambda x.\pi'$,
we need to establish that each $\pi'\sigma[t/x]$ belongs to $\interp{P'[t/x]}$.
We obtain this by induction hypothesis,
since $\pi'[t/x]$ has the same height as $\pi'$, which is smaller than $\pi$, 
and we do have $\Gamma[t/x] \vdash \pi'[t/x] : P'[t/x]$.
Similarly, when $\pi = 
\elimeq{\Gamma',\theta',\sigma',t,t',P',\pi'}{(\theta_i.\pi_i)_i}$,
we establish $\pi\sigma \in \interp{P'\theta'}$
by using the induction hypothesis to obtain that
$\sigma'\sigma \in \interp{\Gamma'\theta'}$,
$\pi'\sigma \in \interp{t\theta' = \t'\theta'}$ and,
for any $i$ and $\theta''$,
$\pi_i\theta''$ is
$(\interp{\Gamma'\theta_i\theta''}\vdash \interp{P'\theta_i\theta''})$-reducible.


\subsection{Proof of Theorem~\ref{th:rec}}

We define $\equiv_{a_i\vec{t}}$ (resp. $\equiv_{\prec a_i\vec{t}}$)
to be the congruence resulting
from the extension of $\equiv$ with rule instances $a_j\vec{t'} \rew B$
for $a_j\vec{t'}\preceq a_i\vec{t}$ (resp. $a_j\vec{t'}\prec a_i\vec{t}$).
Let us also write $P \preceq a_i\vec{t}$ (resp. $P \prec a_i\vec{t}$)
when $a_j\vec{t'}\preceq a_i\vec{t}$ (resp. $a_j\vec{t'}\prec a_i\vec{t}$)
for any $a_j\vec{t'}$ which may occur in $P$.
We shall build a family of pre-models $\M^{a_i\vec{t}}$
such that:
\begin{itemize}
\item[(a)] for any $a_i \vec{t} \prec a_j \vec{t'}$,
   $\interp{a_j \vec{t'}}_{\M^{a_i\vec{t}}} = \SN$;
\item[(b)] for any $P \preceq a_j\vec{t'}$ and $a_j\vec{t'}\prec a_i\vec{t}$,
  $\interp{P}_{\M^{a_j\vec{t'}}} = \interp{P}_{\M^{a_i\vec{t}}}$;
\item[(c)] $\M^{a_i\vec{t}}$ is a pre-model of $\equiv_{a_i\vec{t}}$.
\end{itemize}
We proceed by well-founded induction.
Assuming that $\M^{a_j\vec{t'}}$ is defined for all
$a_j\vec{t'} \prec a_i\vec{t}$, we shall thus build
$\M^{a_i\vec{t}}$.

We first define $\M^{\prec a_i\vec{t}}$ by taking each $\hat{a_j}\vec{t'}$
to be the same as in $\M^{a_j\vec{t'}}$ when $a_j\vec{t'} \prec a_i\vec{t}$
and $\SN$ otherwise.
By this definition and property (b) of our pre-models, we have
\[ \interp{P}_{\M^{\prec a_i\vec{t}}} = \interp{P}_{\M^{a_j\vec{t'}}}
\mbox{ for any $P \preceq a_j\vec{t'}$ and $a_j\vec{t'}\prec a_i\vec{t}$.} \]
Next, we observe that $\M^{\prec a_i\vec{t}}$ is a pre-model of
$\equiv_{\prec a_i\vec{t}}$.
It suffices to check it separately for each rewrite rule.
An instance $P \rew Q$ of a rule defining the initial congruence cannot 
involve the new predicates, so in that case we do have
\[ \interp{P}_{\M^{\prec a_i\vec{t}}} = \interp{P}_{\M} = 
\interp{Q}_{\M} = \interp{Q}_{\M^{\prec a_i\vec{t}}}. \] For a rule instance
$a_j\vec{t'} \rew B$ with $a_j\vec{t'} \prec a_i\vec{t}$,
the property is similarly inherited from $\M_{a_j\vec{t'}}$
because $B \preceq a_j\vec{t'}$ by (2):
\[ \interp{a_j\vec{t'}}_{\M^{\prec a_i\vec{t}}} =
   \interp{a_j\vec{t'}}_{\M^{a_j\vec{t'}}} = 
   \interp{B}_{\M^{a_j\vec{t'}}} = 
   \interp{B}_{\M^{\prec a_i\vec{t}}} \]

We finally build $\M^{a_i\vec{t}}$ to be the same as $\M^{\prec a_i\vec{t}}$
except for $\hat{a_i}\vec{t}$ which is defined as follows:
\begin{itemize}
\item If there is no rule $a_i\vec{t''} \rew B$ such that
  $\vec{t''\theta} \equiv \vec{t}$, we define $\hat{a_i}\vec{t}$ to be $\SN$.
\item Otherwise, pick any such $B$, and
  define $\hat{a_i}\vec{t}$ to be $\interp{B\theta}_{\M^{\prec a_i\vec{t}}}$.
  This is uniquely defined: for any other $a_i\vec{t'} \rew B'$
  such that $a_i\vec{t} = (a_i\vec{t'})\theta'$,
  we have $B\theta \equiv B'\theta'$ by (1), and thus
  $\interp{B'\theta'}_{\M^{\prec a_i\vec{t}}} = 
   \interp{B\theta}_{\M^{\prec a_i\vec{t}}}$ since
  $\M^{\prec a_i\vec{t}}$ is \emph{a fortiori} a pre-model of $\equiv$.
\end{itemize}
This extended pre-model satisfies (a) by construction.
It is also simple to show that it satifies (b).
To check that it verifies (c) we check separately each instance of a rewrite 
rule: by construction, our pre-model is compatible with instances of
the form $a_i\vec{t}\rew B$, and it inherits that property from $M^{\prec 
a_i\vec{t}}$ for other instances.

Finally, we define our new pre-model $\M'$ by taking each $\hat{a_i}\vec{t}$
in $\M^{a_i\vec{t}}$. It is a pre-model of the extended congruence: it is
easy to check that it is compatible with all rewrite rules.

\section*{Appendix II: Formalization of Strong Normalizability} 
\setcounter{subsection}{0}

We detail below the formalization of Tait's strong normalizability argument 
described in Section~\ref{sec:rec-defs}.
The full Abella scripts are available at
\url{http://www.lix.polytechnique.fr/~dbaelde/lics12}.

Following the two-level reasoning methodology facilitated by Abella,
we first define the objects and judgments of interest in a module
file shown on Figure~\ref{fig:snmod}. The specification is given
by means of hereditary Harrop clauses\footnote{
  These clauses also define a $\lambda$Prolog, which gives a way
  to execute them directly.
}. Adequate representations
are obtained by considering uniform proofs for the corresponding clauses.
For instance, uniform proofs of $\Gamma \vdash of~M~T$ are in bijection
with typing derivations in simply typed $\lambda$-calculus.
Abella allows one to reason over the specified objects through 
this representation methodology. Derivability is inductively defined
as a builtin predicate in Abella, written in a concise notation:
\verb#{C |- of M T}# corresponds to the derivability of
$C \vdash of~M~T$.
More details on the methodology and syntax of Abella, refer
to \url{http://abella.cs.umn.edu}.

\begin{figure*}
\begin{verbatim}
isty iota.                                       step (app (abs M) N) (M N).
isty (arrow T T') :- isty T, isty T'.            step (app M N) (app M' N) :- step M M'.
                                                 step (app M N) (app M N') :- step N N'.
istm (app M N) :- istm M, istm N.                step (abs M) (abs M') :- pi x\ step (M x) (M' x).
istm (abs M) :- pi x\ istm x => istm (M x).
                                                 steps M M.
of (app M N) T' :- of N T, of M (arrow T T').    steps M N :- step M M', steps M' N.
of (abs M) (arrow T T') :-
  isty T, pi x\ of x T => of (M x) T'.           subst (app M N) (app M' N') :-
                                                   subst M M', subst N N'.
                                                 subst (abs M) (abs M') :-
                                                   pi x\ pi y\ subst x y => subst (M x) (M' y).
\end{verbatim}
\caption{Module file for the Abella formalization} \label{fig:snmod}
\end{figure*}

\subsection{Preliminaries}

We first prove that \verb#steps# is transitive and that it is a congruence.

\begin{verbatim}
Theorem steps_steps : forall M N P,
  {steps M N} -> {steps N P} ->
  {steps M P}.

Theorem steps_app_left : forall M M' N,
  {steps M M'} -> {steps (app M N) (app M' N)}.

Theorem steps_app_right : forall M M' N,
  {steps M M'} -> {steps (app N M) (app N M')}.

Theorem steps_app : forall M M' N N',
  {steps M M'} -> {steps N N'} ->
  {steps (app M N) (app M' N')}.

Theorem steps_abs : forall M M', nabla x,
  {steps (M x) (M' x)} ->
  {steps (abs M) (abs M')}.
\end{verbatim}

Next, we define open terms, which cannot be done at the specification level
like, for example, the definition of \verb#isty#. We prove a few basic
properties of open terms.

\begin{verbatim}
Define isotm : tm -> prop by
  nabla x, isotm x ;
  isotm (app M N) := isotm M /\ isotm N ;
  isotm (abs M) := nabla x, isotm (M x).

Theorem isotm_subst : forall M N, nabla x,
  isotm (M x) -> isotm N -> isotm (M N).

Theorem isotm_step : forall M M',
  isotm M -> {step M M'} -> isotm M'.

Theorem step_osubst_steps : forall M N N', nabla x,
  isotm (M x) -> {step N N'} -> {steps (M N) (M N')}.
\end{verbatim}

\subsection{Strong normalizability}

We define strong normalizability and prove some basic properties about it.

\begin{verbatim}
Define sn : tm -> prop by
  sn M := forall N, {step M N} -> sn N.

Theorem var_sn : nabla x, sn x.

Theorem sn_step_sn : forall M N,
  sn M -> {step M N} -> sn N.

Theorem sn_preserve : forall M, nabla x,
  sn (app M x) -> sn M.

Theorem sn_app : forall M N,
  sn M -> sn N ->
  (forall M', {steps M (abs M')} -> false) ->
  sn (app M N).
\end{verbatim}

\subsection{Reducibility}

We now give the definition of reducibility. Abella issues a warning here,
because the definition is not monotone, and is thus not formally supported by
its underlying theory. However, as explained in Section~\ref{sec:rec-defs},
this recursive definition can be justified as rewrite rules in our 
framework. Below, it is always going to be used following this
interpretation.

\begin{verbatim}
Define red : tm -> ty -> prop by
  red M iota := sn M ;
  red M (arrow T T') := forall N,
    isotm N -> red N T -> red (app M N) T'.
\end{verbatim}

We now prove the three conditions defining candidates of reducibility.
We first show that reducible terms are SN, and simultaneously that
variables are reducible, which requires a generalization to
showing that $x \ N_1 \ldots N_k$ is reducible when the $N_i$ are SN.

\begin{verbatim}
Define vargen : tm -> prop by
  nabla x, vargen x ;
  vargen (app M N) := vargen M /\ sn N.

Theorem vargen_step_vargen : forall M N,
  vargen M -> {step M N} -> vargen N.

Theorem vargen_steps_noabs : forall M M',
  vargen M -> {steps M (abs M')} -> false.

Theorem vargen_sn : forall M, vargen M -> sn M.

Theorem red_sn_gen : forall M T,
  {isty T} ->
  (red M T -> sn M) /\ (vargen M -> red M T).

Theorem var_red : forall T, nabla x,
  {isty T} -> red x T.

Theorem red_sn : forall M T,
  {isty T} -> red M T -> sn M.
\end{verbatim}

The second condition is that reducts remain in reducibility sets.

\begin{verbatim}
Theorem red_step : forall M M' T,
  {isty T} -> red M T -> {step M M'} ->
  red M' T.

Theorem red_steps : forall M M' T,
  {isty T} -> red M T -> {steps M M'} ->
  red M' T.
\end{verbatim}

Finally,
if all one-step reducts of a neutral term are in a set, then so is the 
term. We only prove it for neutral terms which are applications.
Here, the inner induction is taken care of using an auxiliary lemma.

\begin{verbatim}
Theorem cr3_aux : forall M1 M2 N T1 T',
  {isty T1} -> sn N -> isotm N -> red N T1 ->
  (forall M1 M2,
    (forall M',
       {step (app M1 M2) M'} -> red M' T') ->
    red (app M1 M2) T') ->
  (forall M',
     {step (app M1 M2) M'} ->
     red M' (arrow T1 T')) ->
  red (app (app M1 M2) N) T'.

Theorem red_anti : forall M N T,
  {isty T} ->
  (forall M',
     {step (app M N) M'} -> red M' T) ->
  red (app M N) T.
\end{verbatim}

\subsection{Contexts}

We characterize the contexts used in derivations of \verb.of M T.
and \verb.subst M T. that are involved in the proof of adequacy.
We also define separately their relationship, using \verb#mapctx#.
This approach requires a fair number of book-keeping lemmas.

\begin{verbatim}
Define name : tm -> prop by nabla x, name x.

Define ofctx : olist -> prop by
  ofctx nil ;
  nabla x, ofctx (of x T :: C) :=
    {isty T} /\ ofctx C.

Define substctx : olist -> prop by
  substctx nil ;
  nabla x, substctx (subst x (M x) :: C) :=
    nabla x, isotm (M x) /\ substctx C.

Define mapctx : olist -> olist -> prop by
  mapctx nil nil ;
  nabla x,
    mapctx (of x T :: C) (subst x (M x) :: C')
  :=
      nabla x,
        {isty T} /\ isotm (M x) /\
        red (M x) T /\ mapctx C C'.

Theorem ofctx_member_isty : forall C T,
  ofctx C -> member (isty T) C -> false.

Theorem isty_weaken : forall C T,
  ofctx C -> {C |- isty T} -> {isty T}.

Theorem ofctx_member_isty : forall C M T,
  ofctx C -> member (of M T) C -> {isty T}.

Theorem of_isty : forall C M T,
  ofctx C -> {C |- of M T} -> {isty T}.

Theorem mapctx_of : forall G G' M T,
  mapctx G G' -> member (of M T) G ->
  name M /\
  exists M', red M' T /\ member (subst M M') G'.

Theorem mapctx_subst : forall G G' M M',
  mapctx G G' -> member (subst M M') G' ->
  name M /\
  exists T, red M' T /\ member (of M T) G.

Theorem mapctx_split : forall C C',
  mapctx C C' -> ofctx C /\ substctx C'.

Theorem ofctx_member_name : forall C M T,
  ofctx C -> member (of M T) C -> name M.

Theorem of_isotm : forall C M T,
  ofctx C -> { C |- of M T } -> isotm M.

Theorem substctx_member : forall C M M',
  substctx C -> member (subst M M') C ->
  name M /\ isotm M'.

Theorem subst_isotm : forall C M M',
  substctx C -> isotm M -> { C |- subst M M' } ->
  isotm M'.

Theorem member_not_fresh :
  forall X L, nabla (n:tm),
    member (X n) L -> exists X', X = n\X'.

Theorem substctx_member_unique_aux :
  forall C M M', nabla x,
    substctx (C x) ->
    member (subst x (M x)) (C x) ->
    member (subst x (M' x)) (C x) ->
    M = M'.

Theorem substctx_member_unique : forall C X M M',
  substctx C ->
  member (subst X M) C ->
  {C |- subst X M'} -> M = M'.
\end{verbatim}

\subsection{Adequacy theorem}

\begin{verbatim}
Theorem abs_case : forall M N T', nabla x,
  isotm (M x) -> {isty T'} ->
  sn (M x) -> sn N ->
  red (M N) T' ->
  red (app (abs M) N) T'.

Theorem of_red : forall M M' T C C',
  mapctx C C' ->
  {  C |- of M T } ->
  { C' |- subst M M' } ->
  red M' T.
\end{verbatim}

To apply the adequacy result and obtain strong normalizability,
it only remains to show that for any typed term we can define
the identity substitution with which we have \verb#{ C' |- subst M M }#,
from which \verb#red M T# and \verb#sn M# follow.

\end{document}